\newrobustcmd{\MakeTitleCase}[1]{%
  \ifthenelse{\ifcurrentfield{booktitle}\OR\ifcurrentfield{booksubtitle}%
    \OR\ifcurrentfield{maintitle}\OR\ifcurrentfield{mainsubtitle}%
    \OR\ifcurrentfield{journaltitle}\OR\ifcurrentfield{journalsubtitle}%
    \OR\ifcurrentfield{issuetitle}\OR\ifcurrentfield{issuesubtitle}%
    \OR\ifentrytype{REMOVEIFEXCLUDEBOOKbook}\OR\ifentrytype{mvbook}\OR\ifentrytype{bookinbook}%
    \OR\ifentrytype{booklet}\OR\ifentrytype{suppbook}%
    \OR\ifentrytype{collection}\OR\ifentrytype{mvcollection}%
    \OR\ifentrytype{suppcollection}\OR\ifentrytype{manual}%
    \OR\ifentrytype{periodical}\OR\ifentrytype{suppperiodical}%
    \OR\ifentrytype{proceedings}\OR\ifentrytype{mvproceedings}%
    \OR\ifentrytype{reference}\OR\ifentrytype{mvreference}%
    \OR\ifentrytype{report}\OR\ifentrytype{thesis}}
    {#1}
    {\MakeSentenceCase{#1}}}
\renewcommand{\geq}{\geqslant}
\renewcommand{\leq}{\leqslant}
\renewcommand{\epsilon}{\varepsilon}
\renewcommand{\Delta}{\Updelta}
\newtheorem{problem}{Problem}[section]
\newcommand{\calU}{\ensuremath{{\mathcal{U}}}}
\newcommand{\Hamarray}{\ensuremath{\textup{HamArray}}}
\newcommand{\Prob}{\ensuremath{\textup{Pr}}}
\newcommand{\expected}[1]{\ensuremath{\mathbb{E}\left[#1\right]}}
\newcommand{\Ham}[1]{\ensuremath{\textup{Ham}(#1)}}
\newcommand{\ball}[1]{\ensuremath{\textup{Ball}(#1)}}
\newcommand{\vsum}{\ensuremath{\textup{Sum}}}
\newcommand{\D}{{\ensuremath{D}}} 
\newcommand{\F}{{\ensuremath{F}}} 
\renewcommand{\S}{{\ensuremath{S}}} 
\newcommand{\U}{{\ensuremath{U}}} 
\newcommand{\n}{{\ensuremath{n}}} 
\newcommand{\arrive}{\ensuremath{\textup{arrive}}}
\newcommand{\outside}{\ensuremath{^\textup{c}}}
\newcommand{\fix}{\ensuremath{_\textup{fix}}}
\newcommand{\Cbig}{\ensuremath{\widetilde{C}}}
\newcommand{\cbig}{\ensuremath{\widetilde{c}}}
\newcommand{\ELL}{\ensuremath{\mu}}
\newcommand{\ps}{\ensuremath{\rho}}
\newcommand{\psymb}{\ensuremath{\ast}}
\newcommand{\tsymb}{\ensuremath{\diamond}}
\newcommand{\insertdiagram}[1]{\includegraphics[scale=0.82]{figures/#1}}
\newcommand{\Patrascu}{P{\v a}tra{\c s}cu\xspace}
\begin{document}

\pagestyle{plain}

\title{\LARGE Tight Cell-Probe Bounds for Online \\Hamming Distance Computation\thanks{Supported by EPSRC.}}
\author{Rapha\"el Clifford\thanks{Department of Computer Science, University of Bristol, Bristol, BS8~1UB, U.K.}
\and
Markus Jalsenius\footnotemark[2]
\and
Benjamin Sach\thanks{Department of Computer Science, University of Warwick, Coventry, CV4~7AL, U.K.}
}
\date{}

\maketitle


\begin{abstract}
We show tight bounds for online Hamming distance computation in the cell-probe model with word size $w$. The task is to output the Hamming distance between a fixed string of length $n$ and the last $n$ symbols of a stream. We give a lower bound of $\Omega\left(\frac{\delta}{w}\log n\right)$ time on average per output, where $\delta$ is the number of bits needed to represent an input symbol. We argue that this bound is tight within the model. The lower bound holds under randomisation and amortisation.
\end{abstract}


\section{Introduction}

We consider the complexity of computing the Hamming distance. The question of how to compute the Hamming distance efficiently has a rich literature, spanning many of the most important fields in computer science. Within the theory community, communication complexity based lower bounds and streaming model upper bounds for the Hamming distance problem have been the subject of particularly intense study~\cite{CDIM:03,Woodruff:04,HSZZ:06,JKS:08,BCRT:10,Chakrabarti:11}. This previous work has however almost exclusively focussed on providing resource bounds (either in terms of space or bits of communication) for computing approximate answers. 

 We give the first  time complexity lower bounds for exact Hamming distance computation in an online or streaming context.  Our results are in the cell-probe model where we also provide matching upper bounds.

\begin{problem}[Online Hamming distance]
    For a fixed string $F$ of length $n$, we consider a stream in which symbols arrive one at a time. For each arriving symbol, before the next symbol arrives, we output the Hamming distance between $F$ and the last $n$ symbols of the stream.
\end{problem}

We show that there are instances of this problem for which any algorithm solving it will require $\Omega\left(\frac{\delta}{w}\log n\right)$ time on average per output, where   $\delta$ is the number of bits needed to represent an input symbol and $w$ is the number of bits per cell in the cell-probe model. Lower bounds in the cell-probe model also hold for the popular word-RAM model in which much of today's algorithms are given.  The full statement and the main result of this paper is given in Theorem~\ref{thm:lower}.

\newcommand{\statementLower}{
    In the cell-probe model with $w$ bits per cell there exist instances of the online Hamming distance problem such that the expected amortised time per arriving symbol is $\Omega\left(\frac{\delta}{w}\log n\right)$.
}
\begin{theorem}
    \label{thm:lower}
    \statementLower
\end{theorem}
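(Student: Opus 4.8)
The plan is to prove the lower bound via the cell-probe information-transfer method of \Patrascu, tailored to the sliding-window structure of online Hamming distance. By an averaging argument over the algorithm's random coins it suffices to exhibit a distribution on the input stream and show that \emph{every deterministic} algorithm makes $\Omega\!\left(\frac{\delta}{w}n\log n\right)$ cell probes in expectation over a window of $\Theta(n)$ arriving symbols; amortisation then comes for free since we bound the \emph{total} probe count, not a per-step cost. The hard distribution fixes a carefully designed pattern $F$ over the alphabet $\{0,\dots,2^\delta-1\}$ and lets every arriving symbol be independent and uniform over that alphabet. The crux is a \emph{decoding property} of $F$: for every scale $\ell$, if a block of $\ell$ consecutive stream symbols is unknown but everything else in the relevant windows is known, then the $\Theta(\ell)$ Hamming-distance outputs produced while that block sweeps across the pattern reveal $\Theta(\min(\log\ell,\delta)\cdot\ell)$ of the $\delta\ell$ bits in the block. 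One cannot beat this: each such output is the number of mismatches of the block against one window of $F$, a number in $\{0,\dots,\ell\}$ carrying only $\log\ell$ bits, so at scales $\ell<2^\delta$ no more than $\Theta(\ell\log\ell)$ bits can be extracted, whereas for $\ell\ge 2^\delta$ there is room in $\ell$ outputs to recover the whole block; the property asks that $F$ attain this ceiling at all scales simultaneously. That this is not ruled out by correlations is because consecutive sliding-window Hamming distances recompare \emph{every} overlapping position against a shifted pattern symbol, so successive outputs are far apart and can jointly carry near-maximal entropy.

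Given such an $F$, I would build a balanced binary tree over the $\Theta(n)$ analysed time steps: an internal node $v$ at level $k$ owns $2^{k+1}$ consecutive steps, split into a left half $L_v$ of length $2^k$ (where unknown symbols arrive) and a right half $R_v$ of length $2^k$ (where the outputs we exploit are read). Let $\IT(v)$ be the set of cells written during $L_v$ and read during $R_v$. Two standard facts drive the argument. First, each cell probe is charged to at most one node — the lowest common ancestor of the time it was written and the time it is read is the unique node straddling those two times — so $\expected{\#\text{probes}}\ge\sum_v\expected{|\IT(v)|}$. Second, by the information-transfer lemma, if we condition on all inputs outside $v$'s interval \emph{and} on the inputs in $R_v$, then the symbols in $L_v$ can influence the $R_v$-outputs only through the cells of $\IT(v)$, so $w\cdot\expected{|\IT(v)|}\ge I\!\left(\text{in}(L_v)\,;\,\text{out}(R_v)\mid\text{in}(\text{outside}),\text{in}(R_v)\right)$, up to lower-order accounting terms. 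Once everything but $\text{in}(L_v)$ is fixed, the $R_v$-outputs, after subtracting the now-known contributions of the surrounding symbols, are exactly the sliding Hamming distances of the block $\text{in}(L_v)$ against $2^k$ consecutive windows of $F$; the decoding property makes these reveal $\Theta(\min(k,\delta)\cdot 2^k)$ bits, so $w\cdot\expected{|\IT(v)|}=\Omega(\min(k,\delta)\cdot 2^k)$.

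Summing over the tree completes the proof. A level $k$ contains $\Theta(n/2^k)$ nodes, each contributing $\Omega(\min(k,\delta)\cdot 2^k/w)$, hence $\Omega(n\min(k,\delta)/w)$ per level; summing $k$ from $\Theta(1)$ to $\Theta(\log n)$,
\[
\sum_{k\le\delta}\Omega\!\left(\frac{nk}{w}\right)+\sum_{\delta<k\le\log n}\Omega\!\left(\frac{n\delta}{w}\right)=\Omega\!\left(\frac{n\delta^2}{w}\right)+\Omega\!\left(\frac{n\delta(\log n-\delta)}{w}\right)=\Omega\!\left(\frac{n\delta\log n}{w}\right),
\]
so the expected amortised cost per arriving symbol is $\Omega\!\left(\frac{\delta}{w}\log n\right)$ — the two regimes of the $\min$ are precisely what yield the clean $\delta\log n$.

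The main obstacle is, without question, the decoding property of $F$: I need one fixed string over a size-$2^\delta$ alphabet such that any contiguous block, slid past it, produces a Hamming-distance profile from which the block can be recovered up to the information-theoretic ceiling $\Theta(\min(\log\ell,\delta)\cdot\ell)$, and this has to hold for all block lengths $\ell$ at once, since the tree touches every scale. I would aim for an explicit recursive construction of $F$ — roughly, a concatenation of scale-$k$ gadget blocks, each being a \emph{linearly rich} arrangement of alphabet symbols (so that the Hamming distances of a block against the relevant windows form a near-full-rank system of equations in the block's symbols) — together with book-keeping of the deterministic \emph{filler} symbols in the stream that lets one isolate the free block's mismatch count in each window. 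Verifying that a single $F$ serves all scales, and that the induced systems decode as claimed, is where essentially all of the real work lies; the tree book-keeping above is routine once it is established.
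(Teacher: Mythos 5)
Your tree/information-transfer framework (LCA charging, Yao's principle, the encoding bound $w\cdot\mathbb{E}[|IT(v)|]\gtrsim$ conditional entropy of the outputs) is exactly the paper's outer shell, and your level-by-level summation is fine arithmetic. But the entire content of the theorem lives in the step you defer: exhibiting an $F$ and an input distribution for which the conditional entropy of the outputs over an interval is $\Omega(\delta\cdot 2^{\ell})$. Your proposed route to it has two concrete problems. First, you take the arriving symbols \emph{uniform over the whole alphabet}. Under that distribution a block of $\ell$ uniform symbols compared against $\ell$ fixed symbols has mismatch count concentrated near $\ell(1-2^{-\delta})$ with tiny variance, and nothing forces distinct blocks to produce distinct output profiles; the mutual information between the block and the outputs can collapse far below $\delta\ell$. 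The paper avoids this precisely by \emph{not} using a uniform stream: it builds a set $\mathcal{U}\subseteq[r]^{2r}$ of size $r^{\Omega(r)}$ on which the map $U'\mapsto\mathrm{HamArray}(R,U')$ is injective, and draws the stream as a concatenation of blocks uniform over $\mathcal{U}$, so the output entropy equals the input entropy by construction. Second, your plan to get the decoding property from ``near-full-rank systems of equations'' does not apply: the Hamming distance is a sum of equality indicators, not a linear function of the symbols, which is exactly why the paper has to reduce the problem to counting distinct \emph{vector sums} of $0/1$ vectors and prove existence of a suitable $R$ non-constructively via constant-weight cyclic codes (their Lemmas on $\mathrm{Sum}(V')$ and the code of \cite{AGM:1992}). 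The paper explicitly flags that a naive construction threatens to need $n$ exponential in $2^{\delta}$; getting $|R|=\mathrm{poly}(2^{\delta})$ is the main technical achievement and is absent from your sketch.

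A smaller structural remark: you try to make the decoding property hold at \emph{every} scale, with the $\min(k,\delta)$ ceiling, and recover $\delta\log n$ from the two regimes of the min. This is harder than necessary. The paper proves the entropy bound only for intervals of length $\geq k\sqrt{n}$ (where, since $n\geq r^2=(2^{\delta}-1)^2$, the interval already exceeds $2^{\delta}$ and the ceiling is just $\delta$ per symbol), and sums over only the top $\log\sqrt{n}-O(1)$ levels of the tree, which is already $\Theta(\log n)$ levels each contributing $\Omega(\frac{\delta}{w}n)$. Restricting to large scales also dictates the shape of $F$: one copy of the gadget string $R$ at each power-of-two offset from the end, with a dedicated filler symbol elsewhere, so that during the second half of each analysed interval the unknown blocks slide cleanly over a single copy of $R$ while everything else deterministically mismatches. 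Without that placement your ``book-keeping of filler symbols'' claim does not isolate the block's contribution. As it stands the proposal establishes the routine half of the argument and leaves the essential half unproved.
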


Where $\delta = w$, for example we have an  $\Omega(\log{n})$ lower bound.  Despite the relatively modest appearance of this bound, we prove that it is in fact tight within the cell-probe model. Furthermore, we argue that it is likely to be the best bound available for this problem without a significant breakthrough in computational complexity.

The first cell-probe lower bounds in this online model were given recently for the problems of online convolution and multiplication~\cite{CJ:2011}.  This work introduced the use of the information transfer method~\cite{PD2004:Partial-sums} from the world of data structure lower bounds to this class of online or streaming problems.  Information transfer captures the amount of information that is transferred from the operations in one time interval  to the next and hence the minimum number of cells that must be read to compute a new set of answers.

Our key innovation is a carefully designed fixed string $F$ which together with a random distribution over a subset of possible input streams provide a lower bound for the information transfer between successive time intervals.  The string $F$ is derived by a sequence of transformations. These start with binary cyclic codes and go via binary vectors with many distinct subsums and an intermediate string to finally arrive at $F$ itself. The use of such a purposefully designed input departs from the most closely related previous work~\cite{CJ:2011} and also from much of the lower bound literature where simple uniform distributions over the whole input space often suffice. It also however necessarily creates a number of challenging technical hurdles which we overcome.

%

The central fact that had previously enabled a lower bound to be proven for the online convolution problem was that the inner product  between a vector and successive suffixes of the stream reveals a lot of information about the history of the stream.  Establishing a similar result for online Hamming distance problem appears, however, to be considerably more challenging for a number of reasons.  The first and most obvious is that the amount of information one gains by comparing whether two potentially large symbols are equal is at most one bit, as opposed to $O(\log{n})$ bits for multiplication.  The second is that a particularly simple worst case string could be found for the convolution problem which greatly eased the resulting analysis. We have not been able to find such a simple fixed string for the Hamming distance problem and our proof of the existence of a hard instance is non-constructive and involves a number of new insights, combining ideas from coding theory and additive combinatorics.

The bounds we give are also tight within the cell-probe model. This can be seen by application of an existing general reduction from online to offline pattern matching problems~\cite{CEPP:2011}.  In this previous work it was shown that any offline algorithm for Hamming distance computation can be converted to an online one with at most an $O(\log{n})$ factor overhead. For details of these reductions we refer the reader to the original paper.  In our case, the same approach also allows us to directly convert any cell-probe algorithm from an offline to online setting. An offline cell-probe algorithm for Hamming distance could first read the whole input, then compute the answers and finally output them. This takes $O{\left(\frac{\delta}{w} n\right)}$ cell probes. We can therefore derive an online cell-probe algorithm which takes only $O{\left(\frac{\delta}{w}\log n\right)}$ probes per output, matching the new lower bound we give.  We state the final result in Corollary~\ref{cor:final}.

\begin{corollary}\label{cor:final}
 The expected amortised cell-probe complexity of the online Hamming distance problem is $\Theta(\frac{\delta}{w}\log n)$.
\end{corollary}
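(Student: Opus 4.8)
The plan is to sandwich the complexity between the lower bound already established in Theorem~\ref{thm:lower} and a matching upper bound obtained from a known reduction. Theorem~\ref{thm:lower} supplies the $\Omega\!\left(\frac{\delta}{w}\log n\right)$ side directly: it exhibits instances on which every algorithm pays this much expected amortised time per arriving symbol, which is in particular a lower bound on the worst-case expected amortised cell-probe complexity of the problem.

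For the upper bound I would first write down the obvious offline cell-probe algorithm: reading the fixed string $F$ and the whole stream costs $O\!\left(\frac{\delta}{w}\right)$ probes per symbol, since a symbol of $\delta$ bits spans that many cells; once everything is in registers, all $n$ Hamming-distance outputs can be produced by internal computation, which is free in the cell-probe model, and then written out. This is $O\!\left(\frac{\delta}{w}n\right)$ probes for $n$ outputs, i.e.\ $O\!\left(\frac{\delta}{w}\right)$ amortised per output offline. I would then invoke the general online-to-offline reduction of~\cite{CEPP:2011}, which turns any offline Hamming distance algorithm into an online one at the price of an $O(\log n)$ multiplicative overhead, thereby obtaining an online cell-probe algorithm that uses $O\!\left(\frac{\delta}{w}\log n\right)$ probes per output. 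Combined with the lower bound this yields the claimed $\Theta\!\left(\frac{\delta}{w}\log n\right)$.

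The only point needing care is that the reduction of~\cite{CEPP:2011}, stated there for ordinary algorithms, transfers to the cell-probe model with the overhead being a clean $O(\log n)$ factor on the number of probes rather than hiding some further dependence on $w$ or $\delta$. This should hold because the reduction is purely combinatorial: it splits the stream into blocks, feeds $O(\log n)$ shifted copies of the input to offline subroutines, and stitches their answers together, performing no arithmetic that a cell-probe algorithm cannot carry out at zero probe cost. Checking that each symbol enters only $O(\log n)$ subroutine instances and that assembling one output touches only $O\!\left(\frac{\delta}{w}\right)$ cells is the main---and essentially the only---obstacle, and I expect it to be routine; for the remaining details I would refer the reader to~\cite{CEPP:2011}, as is done in the text above.
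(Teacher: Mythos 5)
Your proposal matches the paper's own argument: the lower bound is taken directly from Theorem~\ref{thm:lower}, and the upper bound comes from the trivial offline cell-probe algorithm using $O\!\left(\frac{\delta}{w}n\right)$ probes combined with the $O(\log n)$-overhead online-to-offline reduction of~\cite{CEPP:2011}, exactly as the paper does. Your extra care about the reduction transferring cleanly to the cell-probe model is a reasonable point the paper also handles only by reference to the original work, so there is nothing to add.
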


One consequence of our results is the first strict separation between the complexity of exact and inexact pattern matching.  Online exact matching can be solved in constant time~\cite{Galil:1981} per new input symbol and our new lower bound proves for the first time that this is not possible for Hamming  distance.   Previous results had only shown such a separation for algorithms which made use of fast convolution computation~\cite{CJ:2011}.  Our new lower bound immediately opens the interesting and as yet unresolved question of how to show the same separation for other distance measures such as, for example, edit distance.  The edit distance between two strings is widely assumed to be harder to compute than the Hamming distance but this has yet to be proven.

Our lower bound also implies a matching lower bound for any problem that Hamming distance can be reduced to.  The most straightforward of these is online $L_1$ distance computation, where the task is to output the $L_1$ distance between a fixed vector of integers and the last $n$ numbers in the stream. A suitable reduction was shown in~\cite{LP:2008}. The expected amortised cell probe complexity for the online $L_1$ distance problem is therefore also $\Theta{\left(\frac{\delta}{w}\log n\right)}$ per new output.



\subsection{Technical contributions}

The use of information transfer to provide time lower bounds is not new,  originating from~\cite{PD2004:Partial-sums}.  However, applying the method to our problem has required a number of new insights and technical innovations. Perhaps the most surprising of these is a new relationship between the Hamming distance, vector sums and constant weight binary cyclic codes.

When computing the Hamming distance there is a balance between the number of symbols being used and the length of the strings. For large alphabets and short strings, one would expect a typical outputted Hamming distance to be close to the length of the string on random inputs and therefore to provide very little information. This suggests that the length of the strings must be sufficiently long in relation to the alphabet size to ensure that the entropy of the outputs is large (a property required by the information transfer method). On a closer look, it is not immediately obvious that large entropy can be obtained unless the fixed string that is being compared to the input stream is \emph{exponentially} larger than the alphabet size. This potentially poses another problem for the information transfer method, namely that the word size $w$ would be much larger than $\delta$ (the number of bits needed to represent a symbol), making a $\log n$ lower bound impossible to achieve.

Our main technical contribution is to show that fixed strings of length only polynomial in the size of the alphabet exist which provide sufficiently high entropy outputs.  Such strings, when combined with a suitable input distribution maximising the number of distinct Hamming distance output arrays, give us the overall lower bound.  We design a fixed string $F$ with this desirable probably in such a way that there is a one-to-one mapping between  many of the different possible input streams and the outputted Hamming distances. This in turn implies large entropy. The construction of $F$ is non-trivial and we break it into smaller building blocks, reducing our problem to a purely combinatorial question relating to vectors sums. That is, given a relatively small set $V$ of vectors of length $m$, how many distinct vector sums can be obtained by choosing $m$ vectors from $V$ and adding them (element-wise). We show that even if we are restricted to picking vectors only from subsets of $V$, there exists a $V$ such that the number of distinct vector sums is  $m^{\Omega(m)}$. We believe this result is interesting in its own right. Our proof for the combinatorial problem is non-constructive and probabilistic, using constant weight cyclic binary codes to prove that there is a positive probability of the existence of a set $V$ with the desired property.


\subsection{The cell-probe model}

 Our bounds hold in a particularly strong computational model, the \emph{cell-probe model}, introduced originally by Minsky and Papert~\cite{MP:1969} in a different context and then subsequently by Fredman~\cite{Fredman:1978} and Yao~\cite{Yao1981:Tables}. In this model, there is a separation between the computing unit and the memory, which is external and consists of a set of cells of $w$ bits each. The computing unit cannot remember any information between operations. Computation is free and the cost is measured only in the number of cell reads or writes (cell-probes). This general view makes the model very strong, subsuming for instance the popular word-RAM model. In the word-RAM model certain operations on words, such as addition, subtraction and possibly multiplication take constant time (see for example~\cite{Hagerup:1998} for a detailed introduction). Here a word corresponds to a cell. As is typical, we will require that the cell size $w$ is at least $\log_2 n$ bits. This allows each cell to hold the address of any location in memory.

The generality of the cell-probe model makes it particularly attractive for establishing lower bounds for dynamic data structure problems and many such results have been given in the past couple of decades. The approaches taken had historically been based only on communication complexity arguments and the chronogram technique of Fredman and Saks~\cite{FS1989:chronogram}.
However in 2004, a breakthrough lead by \Patrascu and Demaine gave us the tools to seal the gaps for several data structure problems~\cite{PD2006:Low-Bounds} as well as giving the first $\Omega(\log{n})$ lower bounds. The new technique is based on information theoretic arguments that we also deploy here. \Patrascu and Demaine also presented ideas which allowed them to express more refined lower bounds such as trade-offs between updates and queries of dynamic data structures.  For a list of data structure problems and their lower bounds using these and related techniques, see for example~\cite{Pat2008:Thesis}.  Very recently, a new lower bound of $\Omega\left((\log{n}/\log{\log{n}})^2\right)$ was given for the cell-probe complexity of performing queries in the dynamic range counting problem~\cite{Larsen:2012}. This result  holds under the natural assumptions of $\Theta(\log{n})$ size words and polylogarithmic time updates and is another exciting breakthrough in the field of cell-probe complexity.

\subsection{Barriers to improving our bounds}\label{sec:previous}

The cell probe bound we give is tight within the model but  still distant from the time complexity of the fastest known RAM algorithms. For the online Hamming distance problem, the best known complexity is $O(\sqrt{n\log{n}})$ time per arriving symbol~\cite{CEPP:2011}. It is therefore tempting to wonder if better upper or lower bounds can be found by some other not yet discovered method. This however appears challenging for at least two reasons.  First, a higher lower bound than $\Omega(\log{n})$ immediately implies a superlinear offline lower bound for Hamming distance computation by the online to offline reduction of~\cite{CEPP:2011}.  This would be a truly remarkable breakthrough in the field of computational complexity as no such offline lower bound is known even for the canonical NP-complete problem SAT.  On the other hand, an improvement of the upper bound for Hamming  distance computation to meet our lower bound would also have significant implications. A reduction that is now regarded as folklore (see appendix) tells us that any $O(f(n))$ algorithm for Hamming distance computation, assuming pattern of length $n$ and text of length $2n$, implies an $O(f(n^2))$ algorithm for multiplying $n\times n$ binary matrices over the integers.   Therefore an $O(\log{n})$ time online Hamming distance algorithm would imply an $O(n\log{n})$  offline Hamming distance algorithm, which would in turn imply an $O(n^2\log{n})$ time algorithm for binary matrix multiplication.  Although such a result would arguably be less shocking than a proof of a superlinear lower bound for Hamming distance computation, it would nonetheless be a significant breakthrough in the complexity of a classic and much studied problem.

\subsection{Previous results for exact Hamming distance computation}

Almost all previous algorithmic work for exact Hamming distance computation has considered the problem in an offline setting.  Given a pattern, $P$ and a text, $T$, the best current deterministic upper bound for offline Hamming distance computation is an $O(|T|\sqrt{|P|\log{|P|}})$ time algorithm based on convolutions~\cite{Abrahamson:1987, Kosaraju:1987}. In \cite{Karloff:1993} a randomised algorithm was given that takes $O((|T|/{\epsilon}^2)\log^2{|P|})$ time which was subsequently modified in~\cite{Indyk:1998} to $O((|T|/{{\epsilon}^3}) \log{|P|})$. Particular interest has also been paid to a bounded version of this problem called the $k$-mismatch problem.  Here a bound $k$ is given and we need only report the Hamming distance if it is less than or equal to $k$. In \cite{LV:1986a}, an $O(|T|k)$ algorithm was given that is not convolution based and uses $O(1)$ time lowest common ancestor (LCA) operations on the suffix tree of $P$ and $T$. This was then  improved to $O(|T|\sqrt{k\log{k}})$ time by a method that combines LCA queries, filtering and convolutions~\cite{ALP:2004}.

\section{Proof overview}

Let us first introduce some basic notation which we will use throughout. For positive integer $n$, we define $[n]=\{0,\dots,n-1\}$. For a string $S$ of length $n$ and $i,j\in[n]$, we write $S[i]$ to denote the symbol at position $i$, and where $j\geq i$, $S[i,j]$ denotes the $(j-i+1)$~length substring of $S$ starting at position $i$. The string $S_1S_2$ denotes the concatenation of strings $S_1$ and $S_2$. We say that $S$ is over the alphabet $\Sigma$ if $S[i]\in \Sigma$ for all $i\in[n]$. The Hamming distance between two strings $S$ and $S'$ of the same length $n$, denoted $\Ham{S,S'}$, is the number of positions $i\in[n]$ for which $S[i]\neq S'[i]$.

The \emph{online Hamming distance problem} is parameterised by a positive integers $n$ and $\delta$ and a string $\F\in\Sigma^n$ where $|\Sigma| \leq 2^\delta$. The parameter $\delta$ therefore denotes the smallest possible number of bits needed to represent a symbol in the alphabet, $\Sigma$. The problem is to maintain a string $\S\in\Sigma^n$ subject to an operation $\arrive(x)$ which takes a symbol $x\in\Sigma$, modifies $S$ by removing the leftmost symbol $\S[0]$ and appending $x$ to right of the rightmost symbol $\S[n-1]$, and then returns the Hamming distance $\Ham{\F,\S}$ between $\F$ and the updated $\S$. We refer to the operation $\arrive(x)$ as the \emph{arrival} of symbol $x$.




In order to prove Theorem~\ref{thm:lower} we will consider a carefully chosen string $\F$ with a random sequence of $n$ arriving symbols and show that the expected running time over these arrivals is $\Omega{(\frac{\delta}{w}n\log n)}$.
We let the $n$~length string $\U\in\Sigma^n$ contain the $n$ arriving symbols of the update sequence and we use $t\in[n]$ to denote the time, where the operation $\arrive(\U[t])$ is said to occur at time $t$.
When referring to updates of $\U$ that take place outside some time interval $[t_0,t_1]$, we use the notation $\U[t_0,t_1]\outside$ to denote the sequence $\U[0]\cdots \U[t_0-1]\U[t_1+1]\cdots \U[n-1]$.
The choice of $\F$ and distribution of updates $\U$, which we defer to Section~\ref{sec:hard-instance}, is the most challenging aspect of this work.

We let the $n$~length array $\D\in[n+1]^n$ denote the Hamming distances outputted during the update sequence $\U$ such that, for $t\in[n]$, $\D[t]=\Ham{\F,\S}$, where $\S$ has just been updated by the arrival of $\U[t]$. 

Following the overall approach of Demaine and \Patrascu~\cite{PD2004:Partial-sums} we will consider adjacent time intervals and study the \emph{information} that is transferred from the operations in one interval to the next. Let $t_0,t_1,t_2\in[n]$ such that $t_0\leq t_1<t_2$ and consider any algorithm solving the online Hamming distance problem.
We define the \emph{information transfer}, denoted $IT(t_0,t_1,t_2)$, to be the set of memory cells $c$ such that $c$ is written during the first interval $[t_0,t_1]$, read at some time $t$ in the subsequent interval $[t_1+1,t_2]$ and not written during $[t_1+1,t]$. Hence a cell that is overwritten in the second interval before being read, is not in the information transfer. The cells of the information transfer contain all the information about the arriving symbols in the first interval that the algorithm uses in order to correctly output the Hamming distances $\D[t_1+1,t_2]$ during the second interval. This fact is captured in the following lemma, which was stated with small notational differences as Lemma~3.2 in~\cite{Pat2008:Thesis}. For completeness we include a full proof.
The overall idea of the proof is to describe an encoding of the information transfer such that any algorithm running on the $n$ arrivals in $\U$, where the symbols outside the first interval, $\U[t_0,t_1]\outside$, are fixed to some known values $\U\fix[t_0,t_1]\outside$, can correctly output the Hamming distances $\D[t_1+1,t_2]$ by using the values $\U\fix[t_0,t_1]\outside$ and decoding the information transfer.

\newcommand{\statementEntropyUpper}{
    The entropy
    \begin{align*}
        &H\big(\D[t_1+1,t_2] \;\big|\; \U[t_0,t_1]\outside = \U\fix[t_0,t_1]\outside \big)\\
         &\leq w + 2w\cdot \mathbb{E}\big[|IT(t_0,t_1,t_2)| \;\big|\; \U[t_0,t_1]\outside = \U\fix[t_0,t_1]\outside\big]\,.
    \end{align*}
}
\begin{lemma}[Lemma~3.2 of \cite{Pat2008:Thesis}]
    \label{lem:entropy-upper}
    \statementEntropyUpper
\end{lemma}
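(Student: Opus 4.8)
The plan is to show that the information transfer $IT(t_0,t_1,t_2)$, together with the fixed symbols $\U\fix[t_0,t_1]\outside$, is enough to reconstruct the output array $\D[t_1+1,t_2]$, and then apply the standard fact that entropy is bounded by the expected length of any (prefix-free) encoding. So the first step is to describe the encoder. Given a run of the algorithm on an input sequence $\U$ that agrees with $\U\fix[t_0,t_1]\outside$ outside the first interval, the encoder writes down, for each cell $c \in IT(t_0,t_1,t_2)$, the pair consisting of the address of $c$ and the contents of $c$ at time $t_1$ (i.e.\ at the end of the first interval). Since an address fits in $w$ bits (we assumed $w \geq \log_2 n$, and there are at most... well, we only need to name the cells actually probed, but writing the address costs $w$ bits) and the contents are $w$ bits, this is $2w$ bits per cell, plus a prefix-free length header of at most $w$ bits to say how many cells there are. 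That accounts for the $w + 2w\cdot|IT(t_0,t_1,t_2)|$ in the bound; taking expectations over $\U[t_0,t_1]$ (conditioned on the fixed outside part) and invoking $H(X) \leq \expected{\text{codeword length of }X}$ gives the claimed inequality, provided the decoder works.

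The second and substantive step is the decoder. The decoder knows the algorithm, knows $\U\fix[t_0,t_1]\outside$, and knows the fixed string $\F$. It simulates the algorithm on the following hybrid input: for the arrivals outside $[t_0,t_1]$ it uses the known values $\U\fix[t_0,t_1]\outside$; it needs no knowledge of the true arrivals inside $[t_0,t_1]$. Concretely, the decoder simulates time steps $0,\dots,t_0-1$ honestly (these are fixed and known), which produces the memory state at time $t_0$. It skips the simulation of the interval $[t_0,t_1]$ entirely. It then simulates $[t_1+1,t_2]$ using the known arrivals $\U\fix$, maintaining its own copy of memory, with one modification: whenever the algorithm reads a cell, the decoder checks whether that cell is in $IT(t_0,t_1,t_2)$ and has not yet been written during the current simulation of $[t_1+1,t_2]$; if so, it supplies the value from the encoding; otherwise it uses the value in its simulated memory (which is either the value left at time $t_0$, correct because cells not in the information transfer and not written in the second interval still hold their time-$t_0$ value — wait, this needs care, see below — or a value written during the second-interval simulation, which is correct since the second-interval arrivals are known). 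Every output $\D[t]$ for $t \in [t_1+1,t_2]$ produced by this simulation equals the true output, because the algorithm's behaviour in $[t_1+1,t_2]$ depends only on cell contents it reads there, and by the definition of the information transfer every such read returns the same value it would in the true run.

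The main obstacle — and the point requiring the most care — is verifying that the decoder's simulated memory returns the correct value on every read during $[t_1+1,t_2]$, i.e.\ that the definition of $IT$ (cells written in $[t_0,t_1]$, read in $[t_1+1,t_2]$, and not overwritten in $[t_1+1,t]$ before that read) exactly captures which reads could "see into" the hidden interval. One has to argue by induction on the time steps $t_1+1 \leq t \leq t_2$ that at the moment of any read at time $t$ of a cell $c$: if $c$ was last written (in the true run, over all of $[0,t]$) during $[t_1+1,t)$, then the decoder's copy agrees because it faithfully replayed that write; if $c$ was last written during $[t_0,t_1]$, then $c \in IT(t_0,t_1,t_2)$ by definition and the decoder reads the encoded time-$t_1$ value, which equals the last-written value; and if $c$ was last written during $[0,t_0)$ (or never), the decoder's copy still holds that value since the decoder replayed $[0,t_0)$ honestly and has not touched $c$ since. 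The subtlety is that the decoder does \emph{not} know the memory state at time $t_1$ on cells outside the information transfer — but it does not need to, because any such cell that is read in $[t_1+1,t_2]$ before being rewritten would, by definition, have had to be written in $[t_0,t_1]$ to be relevant, contradiction; so the only "stale" values the decoder ever relies on are those untouched since before $t_0$, which it knows. Making this case analysis airtight, together with confirming the codeword is prefix-free so that Shannon's source-coding bound applies, is the whole content of the proof.
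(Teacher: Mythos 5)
Your proposal is correct and follows essentially the same route as the paper's own proof: encode each cell of $IT(t_0,t_1,t_2)$ by its address and contents ($2w$ bits each) plus a $w$-bit count, then decode by simulating the algorithm on $[0,t_0-1]$, skipping $[t_0,t_1]$, and replaying $[t_1+1,t_2]$ while serving reads of information-transfer cells from the encoding. Your explicit case analysis of where each read's value comes from is just a more careful spelling-out of the paper's closing observation that any read outside the stored list was last written either before $t_0$ or after $t_1$.
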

\begin{proof}
    The average length of any encoding of $\D[t_1+1,t_2]$ (conditioned on $\D\fix[t_0,t_1]\outside$) is an upper bound on its entropy. We use the information transfer as an encoding in the following way. For every cell $c$ in the information transfer $IT(t_0,t_1,t_2)$, we store the address of $c$, which takes at most $w$ bits under the assumption that the cell size can hold the address of every cell, and we store the contents of $c$, which is a cell of $w$ bits. In total this requires $2w\cdot |IT(t_0,t_1,t_2)|$ bits which are stored consecutively as an array of cells in the memory. In addition we store the size of the information transfer, $|IT(t_0,t_1,t_2)|$, so that any algorithm decoding the stored information knows where the end of the array is. Storing the size of the information transfer requires $w$ bits, thus the average length of the encoding is $w + 2w\cdot \mathbb{E}\big[|IT(t_0,t_1,t_2)| \;\big|\; \U[t_0,t_1]\outside = \U\fix[t_0,t_1]\outside\big]$.

    In order to prove that the described encoding is valid, we describe how to decode the stored information. We do this by simulating the algorithm. First we simulate the algorithm from time 0 to $t_0-1$. We have no problem doing so since all necessary information is available in $\D\fix[t_0,t_1]\outside$, which we know. We then skip from time $t_0$ to $t_1$ and resume simulating the algorithm from time $t_1+1$ to $t_2$. In this interval the algorithm outputs the Hamming distances in $\D[t_1+1,t_2]$. In order to correctly do so, the algorithm might need information about the symbols that arrived during the interval $[t_0,t_1]$. This information is only available through the encoding described above. When simulating the algorithm, for each cell $c$ we read, we check if the address of $c$ is contained in the list of addresses that was stored. If so, we obtain the contents of $c$ by reading its stored value. Each time we write to a cell whose address is in the list of stored addresses, we remove it from the stored list, or blank it out. Note that every cell we read whose address is not in the stored list contains a value that was written last either before time $t_0$ or after time~$t_1$. Hence its value is known to us.
    \hfill $\square$
\end{proof}

While an encoding of the information transfer provides an upper bound on the entropy of the outputs in the interval $[t_1+1,t_2]$, the question is how much information about the symbols arriving in $[t_0,t_1]$ \emph{needs} to be communicated from $[t_0,t_1]$ to $[t_1+1,t_2]$. We answer this question in the next lemma by providing a lower bound on the entropy. The lemma is key to this paper and its proof is given in Section~\ref{sec:hard-instance} where we show that there is a string $\F$ such that for a large set of updates~$\U$, the outputs $\D[t_1+1,t_2]$ uniquely specify a constant fraction of the symbols that arrived in $[t_0,t_1]$.

\begin{lemma}
    \label{lem:entropy-lower}
    There exists a string $\F$ and distribution of updates $\U$ such that for any two intervals $[t_0,t_1]$ and $[t_1+1,t_2]$ of the same length~$2^\ell \geq k\sqrt{\n}$ where $k>0$ is a constant, the entropy
    \begin{equation*}
        H\big(\D[t_1+1,t_2] \;\big|\; \U[t_0,t_1]\outside = \U\fix[t_0,t_1]\outside \big)
            \;\in\; \Omega{(\delta\cdot2^\ell)} \,.
    \end{equation*}
\end{lemma}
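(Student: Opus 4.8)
The plan is to exhibit a string $\F$ and a distribution on $\U$ for which the outputs $\D[t_1+1,t_2]$ in the second interval almost determine the symbols that arrived during the first interval, so that the conditional entropy of the former is at least a constant fraction of the entropy of the latter. The starting point is an elementary bookkeeping identity. Write the $i$-th arrival of the first interval as $\U[t_0+i]$ for $i\in[2^\ell]$ and the $s$-th time step of the second interval as $t_1+1+s$ for $s\in[2^\ell]$. Since the two intervals fit inside $[\n]$ we have $2\cdot 2^\ell\leq\n$ (also forced by $2^\ell\geq k\sqrt\n$ once $\n$ is large), so each such arrival lies in every second-interval window, at position $i-s+\n-1-2^\ell$, and therefore
\begin{equation*}
    \D[t_1+1+s] \;=\; C_s \;+\; \sum_{i=0}^{2^\ell-1}\big[\,\F[\,i-s+\n-1-2^\ell\,]\neq \U[t_0+i]\,\big],
\end{equation*}
where $C_s$ depends only on $\F$, on the interval endpoints, on $s$, and on the fixed symbols $\U\fix[t_0,t_1]\outside$, but \emph{not} on the arrivals inside $[t_0,t_1]$. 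Two features matter: the offsets $C_s$ being constant means that the conditional entropy of $\D[t_1+1,t_2]$ equals the entropy of the vector $\big(\sum_i[\,\F[i-s+\n-1-2^\ell]\neq\U[t_0+i]\,]\big)_{s\in[2^\ell]}$; and the only positions of $\F$ involved, namely $\F[\n-2\cdot 2^\ell,\,\n-2]$, depend on $\n$ and $\ell$ alone, so a single $\F$ can serve every placement of the two intervals — though it must do so for every admissible length, i.e.\ for all $\ell$ with $k\sqrt\n\leq 2^\ell\leq\n/2$ at once.

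Next I would read the vector in the display above as a vector sum. For a symbol $a$ and index $i\in[2^\ell]$, let $v(a,i)\in\{0,1\}^{2^\ell}$ have $s$-th coordinate $[\,\F[i-s+\n-1-2^\ell]\neq a\,]$; then $\D[t_1+1,t_2]$ is a fixed offset plus the element-wise sum $\sum_{i\in[2^\ell]} v(\U[t_0+i],i)$, and the conditional entropy to be bounded is exactly the entropy of this sum under the chosen distribution of the arrivals $(\U[t_0+i])_i$. Because the vectors $v(a,i)$ for different $i$ are all read off the \emph{same} stretch of $\F$ — each is a shifted copy of one common $0/1$ pattern — the columns are not free, independent sets of vectors, and the distribution on $\U$ must restrict, for each $i$, which vectors are actually available in column $i$. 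This is precisely why the combinatorial core of the construction must allow vectors to be picked only from prescribed \emph{subsets}.

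I would then invoke the construction of Section~\ref{sec:hard-instance}: a set $V$ of length-$m$ binary vectors, where $m=2^\ell$, built via the chain constant-weight binary cyclic code $\to$ binary vectors with many distinct subsums $\to$ an intermediate string $\to\F$, together with a set $\calU$ of admissible update sequences, such that the element-wise sum of $m$ vectors — one per column, each drawn from the subset of $V$ that $\calU$ permits in that column — attains $m^{\Omega(m)}$ distinct values, and such that the induced map $\U[t_0,t_1]\mapsto\D[t_1+1,t_2]$ is injective on the slice of $\calU$ obtained by fixing the symbols outside $[t_0,t_1]$, this slice having size $m^{\Omega(m)}$. Taking $\U$ uniform over $\calU$, the outputs $\D[t_1+1,t_2]$ then determine the first-interval arrivals injectively over a set of size $m^{\Omega(m)}=2^{\Omega(m\log m)}$, so
\begin{equation*}
    H\big(\D[t_1+1,t_2]\;\big|\;\U[t_0,t_1]\outside=\U\fix[t_0,t_1]\outside\big)\;=\;\log_2\big(m^{\Omega(m)}\big)\;=\;\Omega\big(2^\ell\cdot\ell\big).
\end{equation*}
Finally, the construction keeps $\n$ polynomial in the alphabet size, so $\delta=\Theta(\log\n)$, and $2^\ell\geq k\sqrt\n$ forces $\ell=\Omega(\log\n)=\Omega(\delta)$; hence $\Omega(2^\ell\cdot\ell)=\Omega(\delta\cdot 2^\ell)$, which is the claimed bound.

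The main obstacle is the combinatorial heart of the argument: producing a set $V$ of only polynomially many length-$m$ binary vectors whose restricted $m$-fold element-wise sums are essentially all distinct. I expect this to be proved non-constructively — choosing the codewords of a constant-weight cyclic code at random and applying a union bound over the potential collisions, which is affordable precisely because the cyclic symmetry keeps their number down — and then to demand considerable care in propagating this property through the intermediate transformations so that the resulting $\F$ has length $\n=\poly(2^\delta)$, induces a clean admissible set $\calU$, and carries the required structure simultaneously in every suffix region $\F[\n-2^{\ell+1},\n-2]$.
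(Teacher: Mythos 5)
Your reduction of the conditional entropy to the entropy of a restricted, per-column vector sum $\sum_i v(\U[t_0+i],i)$ is sound and matches the spirit of the paper, and your final accounting (injectivity on a set of size $2^{\Omega(\delta\cdot 2^\ell)}$ under a uniform distribution, plus $\ell=\Omega(\delta)$ since $n=\poly(2^\delta)$) is fine. But the proposal stops exactly where the lemma's content begins: the existence of an $\F$ and $\calU$ realising that injectivity is asserted by appeal to "the construction of Section~3" and a one-sentence sketch, and you yourself flag it as "the main obstacle". That existence claim \emph{is} Lemma~\ref{lem:entropy-lower}; without it nothing has been proved.

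Moreover, the shape you give the combinatorial core is not the one the paper uses, and it is the harder one to realise. You ask for a single monolithic structure of dimension $m=2^\ell$ (vectors $v(a,i)$ of length $2^\ell$, one subset per column, $m^{\Omega(m)}$ distinct sums), and this must hold simultaneously for every $\ell$ with $k\sqrt{n}\leq 2^\ell\leq n/2$ on nested suffix regions of $\F$ — you do not say how a single string supports this. The paper instead builds a short gadget $R$ of length $r=2^\delta-1=\poly(|\Sigma|)$ with $\log\big|\set{\Hamarray(R,U')}\big|=\Omega(r\log r)$ (Lemma~\ref{lem:combinatorial}), places one copy of $R$ at each power-of-two offset from the end of $\F$ and fills the rest with a symbol $\psymb$ that always mismatches, and draws $\U$ as i.i.d.\ length-$2r$ blocks from a code $\calU$ with pairwise distinct Hamming arrays. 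The interval entropy then comes from $\Theta(2^\ell/r)$ independent blocks each sliding over the single copy of $R$ at offset $2^\ell$; the tiling-by-a-short-gadget is precisely what makes one $\F$ serve every $\ell$ and keeps $n$ polynomial in $2^\delta$. Your sketch of the probabilistic argument is also slightly off: in the paper the cyclic code is a fixed deterministic object and it is the \emph{vectors} of $V$ (of length $\Theta(r^{1/3})$, not $2^\ell$) that are random; the code's minimum distance is what makes the union bound over sum-collisions affordable, while cyclicity is what makes the property survive the removal of a $1/64$ fraction of $V$ (needed because the incremental construction of $U'$ "uses up" positions round by round, not because of per-column restrictions). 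These are the missing ideas, and they are the substance of the proof.
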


\noindent We combine Lemmas~\ref{lem:entropy-upper} and~\ref{lem:entropy-lower} in the following corollary. 

\newcommand{\statementIT}{
    There exists a string $\F$ and distribution of $\U$ such that for any two intervals $[t_0,t_1]$ and $[t_1+1,t_2]$ of the same length $2^\ell \geq k \sqrt{\n}$, where $k>0$ is a constant, and any algorithm solving the online Hamming distance problem, 

    \begin{equation*}
        \mathbb{E}\big[|IT(t_0,t_1,t_2)|\big] \;\in\; \Omega{\left(\frac{\delta}{w}\cdot2^\ell\right)} \,.
    \end{equation*}
}
\begin{corollary}
    \label{cor:IT}
    \statementIT
\end{corollary}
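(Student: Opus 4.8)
The plan is to obtain Corollary~\ref{cor:IT} purely by chaining Lemmas~\ref{lem:entropy-upper} and~\ref{lem:entropy-lower}; all of the real difficulty has already been packed into Lemma~\ref{lem:entropy-lower}, whose proof is deferred to Section~\ref{sec:hard-instance}. First I would fix the string $\F$ and the distribution of updates $\U$ furnished by Lemma~\ref{lem:entropy-lower}, fix an arbitrary algorithm solving the online Hamming distance problem, and fix two adjacent intervals $[t_0,t_1]$ and $[t_1+1,t_2]$ of common length $2^\ell \geq k\sqrt{\n}$. For the fixing $\U\fix[t_0,t_1]\outside$ of the out-of-interval arrivals to which Lemma~\ref{lem:entropy-lower} applies, there is an absolute constant $c>0$ with
\begin{equation*}
    H\big(\D[t_1+1,t_2] \;\big|\; \U[t_0,t_1]\outside = \U\fix[t_0,t_1]\outside\big) \;\geq\; c\,\delta\,2^\ell,
\end{equation*}
while Lemma~\ref{lem:entropy-upper}, applied to the same algorithm and the \emph{same} conditioning event, bounds the identical quantity from above by $w + 2w\cdot\mathbb{E}\big[|IT(t_0,t_1,t_2)| \;\big|\; \U[t_0,t_1]\outside = \U\fix[t_0,t_1]\outside\big]$. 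Matching up the conditioning events exactly is essential so that the resulting chain of inequalities is legitimate; this is one of the two points I would be careful about.

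Rearranging gives
\begin{equation*}
    \mathbb{E}\big[|IT(t_0,t_1,t_2)| \;\big|\; \U[t_0,t_1]\outside = \U\fix[t_0,t_1]\outside\big] \;\geq\; \frac{c\,\delta\,2^\ell - w}{2w}.
\end{equation*}
Since $2^\ell \geq k\sqrt{\n}$ is polynomially large in $n$ whereas $w \geq \log_2 n$, in the regime where the claimed bound is not already vacuous we have $w \leq \tfrac{c}{2}\,\delta\,2^\ell$, so the numerator is at least $\tfrac{c}{2}\,\delta\,2^\ell$ and the right-hand side is $\Omega\!\left(\tfrac{\delta}{w}2^\ell\right)$; if $w$ exceeds this threshold then $\Omega(\tfrac{\delta}{w}2^\ell)$ is itself $O(1)$ and nothing is required.

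It remains to remove the conditioning, since the corollary is stated for the unconditional $\mathbb{E}\big[|IT(t_0,t_1,t_2)|\big]$. Here I would appeal to the law of total expectation: the entropy lower bound of Lemma~\ref{lem:entropy-lower} in fact holds for every value in the support of $\U[t_0,t_1]\outside$ under the hard distribution of Section~\ref{sec:hard-instance} (the conditional distribution of $\U[t_0,t_1]$ is engineered so that the output array $\D[t_1+1,t_2]$ still pins down a constant fraction of those symbols, whatever is fixed outside the window), so the displayed lower bound on the conditional expectation holds uniformly in $\U\fix[t_0,t_1]\outside$, and averaging over this value preserves it. I expect this conditional-to-unconditional passage, together with the conditioning-event bookkeeping above, to be the only points needing care: the remainder is three lines of inequality manipulation, and the genuine obstacle — producing $\F$ and the distribution that make $H(\D[t_1+1,t_2]\mid\cdots)$ as large as $\Omega(\delta\,2^\ell)$ — lies entirely in Lemma~\ref{lem:entropy-lower}.
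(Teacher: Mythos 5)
Your proposal is correct and follows essentially the same route as the paper: the paper's proof likewise subtracts the two entropy bounds under the common conditioning event to get $\mathbb{E}\big[|IT(t_0,t_1,t_2)| \,\big|\, \U[t_0,t_1]\outside = \U\fix[t_0,t_1]\outside\big] \geq \frac{\delta\cdot 2^\ell}{2w}-\frac{1}{2}$ and then takes expectation over $\U[t_0,t_1]\outside$. Your extra care about the additive $w$ term and the uniformity of the entropy lower bound over fixings is exactly the bookkeeping the paper leaves implicit.
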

\begin{proof}
    For $\U[t_0,t_1]\outside$ fixed to $\U\fix[t_0,t_1]\outside$, by comparing Lemmas~\ref{lem:entropy-upper} and~\ref{lem:entropy-lower} we see that
    \begin{equation*}
        \mathbb{E}\big[|IT(t_0,t_1,t_2)| \;\big|\; \U[t_0,t_1]\outside = \U\fix[t_0,t_1]\outside\big] \;\geq\; \frac{\delta \cdot 2^\ell}{2w}-\frac{1}{2}\,.
    \end{equation*}
    The result follows by taking expectation over $\U[t_0,t_1]\outside$ under the random sequence~$\U$.
    
    \hfill$\square$
\end{proof}

We are now in a position to prove Theorem~\ref{thm:lower}.

\begin{proof}[Proof of Theorem~\ref{thm:lower}] The main idea is to sum the information transfer between many pairs of time intervals and show that over the $n$ arrivals in $U$, a large amount of information must have been transferred. To capture this idea, we conceptually think of a balanced binary tree over the time axis, where the leaves, from left to right, represent the time $t$ from $0$ to $n-1$, respectively. An internal node $v$ is associated with the times $t_0$, $t_1$ and $t_2$ such that the two intervals $[t_0,t_1]$ and $[t_1+1,t_2]$ span the left subtree and the right subtree of $v$, respectively. The information transfer $IT(v)$ associated with $v$ is $IT(t_0,t_1,t_2)$. This idea was introduced in~\cite{PD2004:Partial-sums} in the context of showing a lower bound for the partial sums problem. Here we use a refined version of the technique which we need in order to cope with short intervals; the lower bound on the size of the information transfer is by Corollary~\ref{cor:IT} only guaranteed for sufficiently large intervals.

A crucial property of the information transfer method which was proven in~\cite{PD2004:Partial-sums} is that the cell probes counted in some information transfer $IT(v)$ associated with a node $v$ of the tree are not counted in $IT(v')$ of any other node $v'$. Therefore, using linearity of expectation, we have that the sum over all nodes, $\sum_v \mathbb{E}[IT(v)]$ is a lower bound on the expected number of cell probes over $n$ updates. However, the lower bound on $\mathbb{E}[IT(v)]$ given by Corollary~\ref{cor:IT} is only guaranteed for nodes representing sufficiently large intervals. Fortunately, this includes all nodes in the top $\log \sqrt{n} - O(1)$ levels of the tree. By summing the information transfer over these nodes we have that any algorithm performs $\Omega\big(\frac{\delta}{w}\,n\log n\big)$ expected cell probes over $n$ updates. This concludes the proof of Theorem~\ref{thm:lower}. The remainder of the paper concerns the proof of Lemma~\ref{lem:entropy-lower} upon which the main result relies.

Although we have only shown the existence of probability distributions on the inputs for which we can prove lower bounds on the expected running time of any deterministic algorithm, by Yao's minimax principle~\cite{Yao1977:Minimax} this also immediately implies that for every (randomised) algorithm, there is a worst-case input such that the (expected) running time is equally high.  Therefore our lower bounds hold equally for randomised algorithms as for deterministic ones.
    \hfill$\square$
\end{proof}

\section{The hard instance}\label{sec:hard-instance}

\begin{figure}[t]
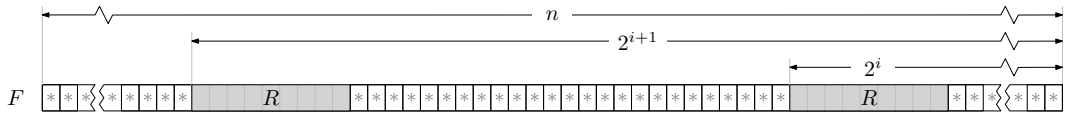

    \centering
    \insertdiagram{string_F}
    \caption{\label{fig:fstring}The string $F$ has a copy of $R$ starting at each position that is a power of two from the end. All other positions have the symbol~\psymb. (In this particular diagram, $n$ is not a power of two.)}
\end{figure}

In this section we discuss the proof of Lemma~\ref{lem:entropy-lower}, the lower bound on the conditional entropy of $\D[t_1+1,t_2]$.
We will describe a string $F$ with the property that the outputs $\D[t_1+1,t_2]$ during the interval $[t_1+1,t_2]$ determine the values of a constant fraction of the symbols in $U[t_0,t_1]$. By picking the update sequence $U$ from a large set of strings we ensure that the entropy of $\D[t_1+1,t_2]$ is large.

The description of the \emph{hard instance}, i.e. the string $F$, is given in two parts. In this section we use a certain string, denoted $R$, to construct $F$. A full description of $R$ itself is given separately in Section~\ref{sec:R} and is where the non-constructive part of the proof lies. For the purpose of constructing $F$ we need $R$ to have a particular property, which is stated in Lemma~\ref{lem:combinatorial} below. The reason why Lemma~\ref{lem:combinatorial} is important will be clear  shortly. First we introduce some notation.

For a string $S_1$ of length $m$ and a string $S_2$ of length $2m$, we write $\Hamarray(S_1,S_2)$ to denote the array of length $m+1$ such that, for $i\in[m+1]$, $\Hamarray(S_1,S_2)[i]=\Ham{S_1,S_2[i,\;i+m-1]}$. That is, $\Hamarray(S_1,S_2)$ is the array of Hamming distances between $S_1$ and every $m$~length substring of $S_2$.

\begin{lemma}
\label{lem:combinatorial}
    For any $r$ there exists a string $R \in [r]^r$ such that $$\log \big|\set{\Hamarray(R,U') \;|\; \textup{$U' \in [r]^{2r}$}}\big| \;\in\; \Omega(r\log r)\,.$$
\end{lemma}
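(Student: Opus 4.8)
The plan is to follow the chain of reductions indicated in the introduction: turn the statement into a purely additive question about vector sums, and then settle that question with a probabilistic argument over constant-weight cyclic binary codes. First I would reformulate additively. For a fixed $R\in[r]^r$ put $C_a=\{j\in[r]:R[j]=a\}$, and for $U'\in[r]^{2r}$ and a position $p\in[2r]$ let $\chi_p\in\{0,1\}^{r+1}$ be the indicator of the set $(p-C_{U'[p]})\cap[0,r]$ of valid window starts. Counting agreements position by position --- position $p$ of $U'$ contributes an agreement to the window starting at $i$ exactly when $p-i\in C_{U'[p]}$ --- gives
\begin{equation*}
\Hamarray(R,U')\;=\;(r,\dots,r)\;-\;\sum_{p=0}^{2r-1}\chi_p\,.
\end{equation*}
Each $\chi_p$ depends only on $p$ and on the single symbol $U'[p]$, ranging as $U'[p]$ varies over the ``menu'' $\{\,\mathbb 1_{(p-C_a)\cap[0,r]}:a\in[r]\,\}$. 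Hence $\big|\{\Hamarray(R,U'):U'\in[r]^{2r}\}\big|$ equals the number of distinct values of $\sum_p\chi_p$, and the task becomes: choose the partition $\{C_a\}$ of $[r]$ (equivalently, choose $R$) so that picking one vector from each of these $2r$ menus yields $r^{\Omega(r)}$ distinct sums.

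For the combinatorial core, arrange that some symbol, say $0$, does not occur in $R$ --- so that its menu is $\{\vec 0\}$ --- and restrict to those $U'$ equal to $0$ outside a block of $m=\Theta(r)$ positions; then $\sum_p\chi_p$ is a sum of $m$ vectors, the $k$-th chosen from a menu of cyclic translates of the sets $C_a$. So it suffices to produce a set $V$ of $0/1$ vectors of length $\Theta(r)$, closed under cyclic shifts and of size only $\poly(r)$ (superlinear in $r$), whose $m$-fold sums $v_1+\dots+v_m$ ($v_k\in V$, repetitions allowed) take $m^{\Omega(m)}$ distinct values --- and robustly so, still $m^{\Omega(m)}$ when the $v_k$ are confined to arbitrary large sub-collections of $V$ (this robustness is needed because $R$ is later embedded into the hard instance $F$, where only some translates are available per slot). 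The set $V$ comes from a constant-weight cyclic binary code: a codeword of weight $\omega=\Theta(r)$ contributes all $\Theta(r)$ of its shifts, so a code with $\poly(r)$ codewords already has $|V|=\poly(r)$ --- which is exactly why the final string is only polynomially, not exponentially, longer than the alphabet. Writing $f_c(x)=\sum_{j:c_j=1}x^j\in\mathbb Z[x]/(x^L-1)$ with $L\approx r$ prime, a shifted codeword is $x^\sigma f_c$ and an $m$-fold sum is $\sum_k x^{\sigma_k}f_{c_k}$, where the primality of $L$ and $0<\omega<L$ keep each $f_c$ from being a zero divisor. I would then establish non-constructively that such a $V$ exists: over a random choice of the weight-$\omega$ codewords, bound the expected number of colliding pairs of distinct selections by estimating, for each fixed cancellation pattern $\sum_\ell\pm x^{\sigma_\ell}f_{c_\ell}=0$, the probability that it vanishes in $\mathbb Z[x]/(x^L-1)$; with the constant weight controlling the number of patterns, a random code makes the collision count $o\big(\binom{|V|+m-1}{m}\big)$, so the number of distinct sums is $\Omega\big(\binom{|V|+m-1}{m}\big)=m^{\Omega(m)}$, and any code meeting the bound is the desired $V$. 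Unwinding the two reductions with $m=\Theta(r)$ gives $\log\big|\{\Hamarray(R,U')\}\big|\ge\log m^{\Omega(m)}\in\Omega(r\log r)$.

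The crux --- and the step I expect to be hardest --- is obtaining $r^{\Omega(r)}$, rather than merely $2^{\Theta(r)}$, distinct sums from a menu system of only polynomial total size: a single cyclic orbit, or subset sums of $\Theta(r)$ vectors, yield only $2^{\Theta(r)}$ values, so one is forced simultaneously to keep every per-slot menu polynomially large and to make the selections across slots nearly independent, and no explicit $V$ with this property is known. The existence must therefore be forced probabilistically, and making the vanishing-probability estimate for random constant-weight cyclic codes uniform enough to beat the union bound over $m^{\Theta(m)}$ cancellation patterns --- and over all large sub-collections of $V$ --- is the technical heart of the argument.
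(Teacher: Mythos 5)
Your high-level strategy --- reduce to counting distinct vector sums and prove existence probabilistically via constant-weight cyclic binary codes --- is indeed the paper's announced strategy, but your concrete reduction and parameters do not work, and the step you yourself defer as ``the technical heart'' is exactly where the paper's real machinery lives. The decisive problem is the weight $\omega=\Theta(r)$. In your additive reformulation the sets $C_a=\{j:R[j]=a\}$ partition $[r]$, so $\sum_a|C_a|=r$ and at most $r/\omega$ symbols can occur with multiplicity $\omega$; with $\omega=\Theta(r)$ that is $O(1)$ symbols, so each per-slot menu has $O(1)$ nonzero options and the number of selections --- hence of distinct sums --- is at most $2^{O(r)}$, far short of $r^{\Omega(r)}$. (This also contradicts your own requirement that $|V|$ be superlinear.) To get $r^{\Omega(1)}$ useful choices per slot one is forced to take $\omega\le r^{1-\Omega(1)}$, and the paper works at scale $\mu=\Theta(r^{1/3})$: $R$ is a concatenation of $\mu^2$ blocks $\rho_i$ of length $\mu$, each encoding a random $v_i\in\{0,1\}^\mu$ using a symbol $i$ that occurs nowhere else, padded with symbols $\star$ and $\diamond$ that never match. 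The $\Omega(r\log r)$ bits are then obtained not in one shot but by amplification: each round places $\mu$ symbols in $U'$ and contributes, by Lemma~\ref{lem:vecsum}, a choice of $\mu^{\Omega(\mu)}$ outputs over $\mu$ consecutive alignments, and $\Theta(\mu^2)$ rounds multiply up to $\mu^{\Omega(\mu^3)}=r^{\Omega(r)}$. The block structure is essential, not cosmetic: a placed symbol $i$ matches only while it overlaps $\rho_i$, so its contribution is confined to its intended window of $\mu$ alignments. Your global translates $(p-C_a)\cap[0,r]$ have no such locality, so shifted copies of the same $C_a$ interact across the whole window and the cancellation-pattern union bound you would need has no handle.

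Two further gaps. First, your first-moment plan --- make the expected number of colliding pairs of selections $o\bigl(\binom{|V|+m-1}{m}\bigr)$ --- is not what the paper does and is delicate at best: the elements of your $V$ are highly correlated (all shifts of one codeword are deterministic functions of each other), and the exponents in ``number of pairs times collision probability'' are nearly tight. The paper instead counts only one distinct sum per \emph{good ball} of radius $\mu/16$ around each codeword of a cyclic code with minimum distance $7\mu/4$; the minimum distance forces cross-ball selections to differ in at least $\mu/2$ independent random vectors, which is what makes the collision probability $(\mu/2)^{-\mu/2}$ beat the union bound, and the resulting count is $|C|/2\approx\mu^{\mu/9}$ per round, not $\binom{|V|}{\mu}$. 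Second, the robustness to large multi-subsets of $V$ is needed because positions of $U'$ are used up from round to round \emph{inside} the proof of Lemma~\ref{lem:combinatorial} (not because of the later embedding into $F$, as you suggest), and the cyclic property of the code is used precisely there: every coordinate lies in the same number of codewords, so deleting $|V|/64$ vectors destroys at most $|C|/4$ codewords. As written, your argument would need to be rebuilt around sub-polynomial weight with block locality, round-by-round amplification, and the ball/minimum-distance collision argument.
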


In Section~\ref{sec:R} we describe how
$R$ is partitioned into many smaller substrings containing distinct symbols. By choosing the substrings at random, we show that there is a positive probability of getting a string $R$ with the desired property and hence such an $R$ exists. The proof of Lemma~\ref{lem:combinatorial} will demonstrate an interesting connection between Hamming distances, vector sums and cyclic codes.

For the construction of $F$ we set $r=2^\delta-1$. Recall that $\delta$ is the number of bits needed to represent a symbol of the alphabet. The ``minus one'' ensures that we can reserve one symbol  that does not appear in the alphabet $[r]$ over which $R$ is defined. We use $\psymb$ to denote this symbol.

From now on, let $R$ be a string with the property of Lemma~\ref{lem:combinatorial}.
We define $\calU\subseteq [r]^{2r}$ to be a largest set of strings such that for any two distinct $\U'_{1},\U'_{2}\in\calU$, $\Hamarray(R,\U'_1)\neq \Hamarray(R,\U'_2)$. The set $\calU$ is not unique and is chosen arbitrarily as long as its size is maximised. By Lemma~\ref{lem:combinatorial} we have that the size of $\calU$ is at least $r^{cr}$ for some constant $c$. We will see that setting the update sequence $U$ to consist of strings chosen randomly from $\calU$ yields a sufficiently rich variety of outputted Hamming distances $\D[t_1+1,t_2]$ to make its entropy large.

\subsection{The fixed string $F$}

For the construction of $F$ to work  we need the length $n$ of $F$ to be sufficiently large in comparison to $r$. To avoid unnecessary technicalities, think of $n$ as being at least $r^2$. Observe that with the cell size $w=\log n$ and $n$ being any polynomial in $r$ (i.e. $\delta=\Theta(\log n)$ as $r=2^\delta-1$), our lower bound for the online Hamming distance problem simplifies to $\Omega(\log n)$.

The perhaps easiest way to describe $F$ is to start with an $n$~length string that consists entirely of the symbol $\psymb$, i.e. the string $\{\psymb\}^n$ and then replace $r$~length substrings $\{\psymb\}^r$ with the string $R$ as follows. Refer to Figure~\ref{fig:fstring}. For each suffix of $\{\psymb\}^n$ that has a power-of-two length and is longer than $r$, replace its $r$~length prefix with the string $R$. There will therefore be a logarithmic number of copies of $R$ in $F$, and they all start at power-of-two positions from the end of $F$. 

The benefit of this construction is seen with the following reasoning, which outlines the proof of Lemma~\ref{lem:entropy-lower}.
Randomly pick an $n$~length string $W$ that is the concatenation of $n/(2r)$ $2r$~length strings chosen independently and uniformly at random from $\calU$. Consider an interval $[t_1+1,t_2]$ that has length $2^\ell$ (which we assume is at least a constant times $r$) and let $U$ be the update sequence such that the updates during $[t_0,t_1]$ are induced by $W$ (i.e. $\U[t_0,t_1]=W[t_0,t_1]$) whereas the updates $\U[t_0,t_1]\outside$ are fixed to some arbitrary $\U\fix[t_0,t_1]\outside$ over the alphabet $[r]$. It will be useful to refer to Figure~\ref{fig:doubling} when reading the rest of this section. Here we have drawn the string $F$ with three occurrences of the substring $R$. We have also drawn the text stream under three different alignments with $F$, labelled 1,2 and~3, respectively. The first alignment corresponds to the time $t_1$ where $\U[t_0,t_1]$ has just been fed into the stream. In this particular example, $\U[t_0,t_1]$ consists of eight $2r$ length substrings drawn from $\calU$, labelled $U'_1,\dots,U'_8$. The third alignment corresponds to the time $t_2$ where another $2^\ell$ symbols have been fed into the stream.

\begin{figure*}[t]
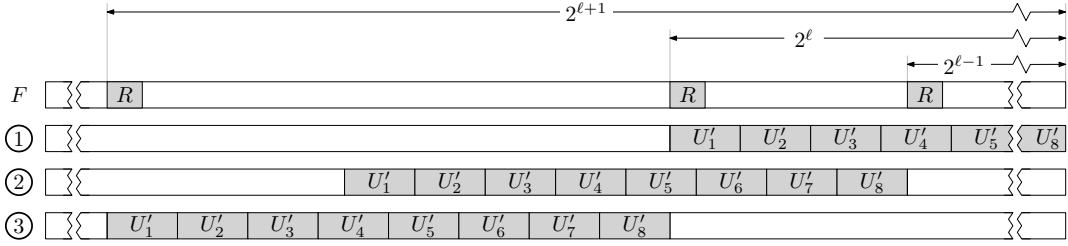

    \centering
    \insertdiagram{doubling_slide}
    \caption{\label{fig:doubling}The string $F$ and three different alignments with the text stream. The first alignment is at time $t_1$ and the third alignment at time $t_2$. The string $\U[t_0,t_1]$ is here the concatenation of $U'_1,\dots,U'_8$.}
\end{figure*}

The contribution to the outputs of $\D[t_1+1,t_2]$ can be split into two parts: those coming from mismatches with $\U[t_0,t_1]$ and those coming from mismatches with $\U[t_0,t_1]\outside$. Since the latter is known to us, we can derive from $\D[t_1+1,t_2]$ the contribution from the former. From the construction of $F$, it is not too difficult to see that for most of the second half of the interval $[t_1+1,t_2]$ (starting with the second alignment of Figure~\ref{fig:doubling}), all the unknown updates of $\U[t_0,t_1]$ are aligned with either $\psymb$ (causing a mismatch) or the occurrence of $R$ at the start of the $2^\ell$~length suffix of $\F$. Around half of the $2r$~length substrings of $\U[t_0,t_1]$ drawn from $\calU$ will in turn slide over this copy of $R$ while every other symbol of $\U[t_0,t_1]$ is aligned with $\psymb$. In the example of Figure~\ref{fig:doubling} we see that between the second and the third alignment, the two substrings $U'_6$ and $U'_7$ will each slide over $R$ while all other symbols of $\U[t_0,t_1]$ are aligned with $\psymb$. It is not difficult to see that if we scale the example to contain many more that eight $2r$ substrings, half of them, minus a constant number, will indeed have the property just described. Each such substring $U'\in\calU$ contributes $\Hamarray(R,U')$ to the outputs and as we reasoned above, $\Hamarray(R,U')$ can therefore be derived from the corresponding substring of $\D[t_1+1,t_2]$.
By definition there is only one string $U'\in\calU$ that can give rise to $\Hamarray(R,U')$, hence we can uniquely identify which string $U'$ of $\calU$ was chosen. In the example of Figure~\ref{fig:doubling} we can therefore uniquely identify the two strings $U'_6$ and $U'_7$.

In total,
around half of the substrings from $\calU$ in $\U[t_0,t_1]$ are uniquely identified through the outputs $\D[t_1+1,t_2]$. More precisely, $\Theta(2^\ell/r)$ substrings are uniquely identified. As the substrings were chosen uniformly at random from $\calU$, we have by Lemma~\ref{lem:combinatorial} that the entropy of $\D[t_1+1,t_2]$ is $\Omega(2^\ell/r\cdot r\log r)=\Omega(\log r\cdot 2^\ell)=\Omega(\delta\cdot 2^\ell)$ since $r$ was defined to be $2^\delta-1$. This property holds only for sufficiently large intervals, where our suggested choice of $n$ being at least $r^2$ comfortably makes the property true for intervals of length at least some constant times $\sqrt{n}$.

To sum it up, we have described a string $F$ and a set $\calU$ such that if the update sequence $U$ is picked by concatenating strings chosen independently and uniformly at random from $\calU$, we have the lower bound on the conditional entropy of $\D[t_1+1,t_2]$ of Lemma~\ref{lem:entropy-lower}.
%

\section{A string with many~different~Hamming~arrays}
\label{sec:R}
Our remaining task is now to show that a string $R$ does indeed exist which gives  many different Hamming arrays,  as demanded by Lemma~\ref{lem:combinatorial}. This is both the most important and the most technically detailed part of our overall lower bound proof.  To recap, we claim that for any $r$ there exists a string $R \in [r]^r$ which permits a large number of distinct Hamming arrays when compared to every string in $[r]^{2r}$, precisely, that there exists a string $R$ with $\log \big|\set{\Hamarray(R,U') | \textup{$U' \in [r]^{2r}$}}\big| \in \Omega(r\log r)$. 

\subsection{The structure of $R$}

The string $R$ is constructed by concatenating $\mu^2 = \Theta(r^{2/3})$ substrings each of length $\mu=\Theta(r^{1/3})$, containing exactly two symbols. One of the symbols will be common to all substrings and the other unique to that substring.  Denote the $i$-th such substring by $\ps_i$ and hence $R=\ps_0\ps_1\cdots\ps_{(\ELL^2-1)}$. Each substring of $R$, $\ps_i$, will correspond to a binary vector $v_i$ in the following natural way. Let $V=\{v_0,\dots,v_{(\ELL^2-1)}\}$ be a multi-set of $\ELL$~length vectors from $\{0,1\}^\ELL$ which will have the property set out in Lemma~\ref{lem:vecsum} below.  The string $\ps_i \in \{\star,i\}^\ELL$ is then given by taking $v_i$ and replacing every occurrence of $1$ with an occurrence of the symbol $i$ and similarly every $0$ with the symbol $\star$ (formally $\star=\mu^2 \in [r]$). For example, if $\ELL = 3$, $v_2 = (0,1,0)$ and $v_7= (1,1,0)$ then $\rho_2 = \star2\star$ and $\rho_7 = 77\star$. Observe that $\ps_i$ contains only two symbols and the symbol $i$ occurs only in $\ps_i$. We will assume w.l.o.g. that $R$ is a perfect cube $r=\mu^3$ and that $\mu-1$ is a prime which we will also require below. The result generalises to arbitrary $r$ via a simple reduction to a smaller $r$ which meets the assumptions.

The substrings $\rho_i$ of $R$ can be seen as encodings of binary vectors from a multi-set $V$ by the method described above.  We will also show that by suitably selecting the updates $U'$ from the unique symbols in $R$, the Hamming distances that result will be element-wise sums of the vectors from this multi-set.  We will therefore have reduced the problem of finding a string giving a large number of distinct Hamming arrays to that of finding multi-sets with a large number of distinct vector sums.  Lemma~\ref{lem:vecsum} captures the property that we require of $V$. The reason that the property must hold for large
multi-subsets\footnote{We use the term \emph{multi-subset} of $V$ to denote any multi-set obtained from $V$ by removing zero or more elements, e.g. $\{1,1,4,5,5\}$ is a multi-subset of $\{1,1,1,4,4,5,5,7,8\}$.}
of $V$ will become apparent from the construction below. Intuitively it is because we will `use up' vectors from $V$ as we proceed, and therefore we may assume that we always operate in the most pessimistic scenario where many vectors from $V$ are already unavailable.

In order to state Lemma~\ref{lem:vecsum} succintly, we need to define what we mean by distinct elements of a multi-set~$X$. We consider an arbitrary ordering of the elements of $X$ and refer to $X[i]$ as the $i$th element of $X$.  We say that elements $x_1,x_2, x_3,\ldots \in X$ are \emph{distinct} if $x _1= X[i_1], x_2 = X[i_2], x_3 = X[i_3], \dots$ and $i_j \neq i_{j'}$ for $j \neq j'$. In this way $x_1 = 1$, $x_2=1$, $x_3=3$ are distinct elements of the multi-set $\{1,1,2,3\}$, but $x_1 = 1$, $x_2=1$, $x_3= 1$ are not.

\begin{lemma}
    \label{lem:vecsum}
    For any $\ELL>40$ such that $\ELL-1$ is a prime, there exists a multi-set $V$ of vectors from
    $\{0,1\}^\ELL$ such that $|V|=\ELL^2$ and for any multi-subset $V'\subseteq V$ of size at least
    $(63/64)|V|$,
    \begin{align*}
        \left|\set{w_1 + \cdots + w_\ELL \,|\, \text{distinct }w_1,\dots,w_\ELL \in V'}\right| \geq \ELL^{(\ELL/10)} .
    \end{align*}

\end{lemma}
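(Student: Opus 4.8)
The plan is to construct $V$ probabilistically using a constant-weight binary cyclic code as the source of randomness, and then show that with positive probability every sufficiently large multi-subset $V'$ has many distinct $\ell$-fold sums.

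The first step is to identify the right combinatorial target. For distinct $w_1,\dots,w_\ell \in V'$, the sum $w_1 + \cdots + w_\ell$ is an integer vector in $\{0,\dots,\ell\}^\ell$. To produce many distinct sums it suffices to exhibit many disjoint ``blocks'' of coordinates that can be controlled independently: if the vectors in $V$ can be grouped so that flipping which vectors we include changes the partial sum on one block without affecting the others, then the number of distinct sums is at least the product of the per-block counts. So the real goal is to find a $V$ (of size $\ell^2$, vectors of length $\ell$) such that, after deleting up to $(1/64)|V|$ vectors, we can still find roughly $\ell/10$ pairwise-``independent'' choices, each contributing a factor of at least $2$ (or more) to the count — giving $\ell^{\Omega(\ell)}$, and more precisely $\geq \ell^{\ell/10}$ after the constants are tuned.

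The second step is the probabilistic construction. I would take $C$ to be a binary cyclic code of length $\ell-1$ (using the primality of $\ell-1$, e.g. a code whose zeros are governed by the multiplicative structure of $\mathbb{F}_{\ell-1}$ or $\mathbb{F}_{2^s}$ for a suitable $s$) and restrict to codewords of a fixed Hamming weight; pad each to length $\ell$. Form $V$ by drawing $\ell^2$ codewords. The key property of constant-weight codewords from a good cyclic code is that any two distinct ones differ in many positions, so a moderate-size collection of them behaves ``spread out'': for any target pattern on a small coordinate block, only few code vectors can match it. Then, for a fixed multi-subset $V'$ of size $\geq (63/64)|V|$, I would estimate the probability that $V'$ fails to contain the required independent structure, using that each deleted vector can ``kill'' only a bounded number of potential blocks (because of the distance property), so a $(1/64)$-fraction of deletions cannot destroy all of the $\Theta(\ell)$ blocks we need. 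Finally, union-bound over all multi-subsets $V'$: there are at most $2^{\ell^2}$ of them (really $\binom{\ell^2}{\leq \ell^2/64}$), so I need the per-$V'$ failure probability to beat $2^{-\ell^2}$, which forces the code parameters and the block count to be chosen so the failure probability is doubly-exponentially small in $\ell$; the hypothesis $\ell > 40$ is there to absorb the explicit constants.

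The main obstacle — and where the argument's real content lies — is the interplay between the three constants: the deletion fraction $1/64$, the exponent $1/10$ in $\ell^{\ell/10}$, and the code's rate and distance. One must choose a constant-weight cyclic code whose codewords are numerous enough that $\ell^2$ random ones are ``generic'' (pairwise far apart, no small linear dependencies over $\mathbb{Z}$ on the relevant blocks), yet structured enough that the distance bound is quantitatively strong; and one must show robustness of the independent-block structure under adversarial deletion of a $(1/64)$-fraction. I expect the cleanest route is to first prove a lemma of the form ``a random $V$ of size $\ell^2$ drawn from the constant-weight code has, with probability $1 - 2^{-\omega(\ell^2)}$, the property that every $(63/64)$-fraction sub-multiset contains $\geq \ell/10$ coordinate blocks on which partial sums can be steered independently,'' and then deduce the sum-count bound deterministically from that structural property. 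Verifying that lemma — balancing the union bound over sub-multisets against the tail bounds coming from the code's distance — is the crux.
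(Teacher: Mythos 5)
There is a genuine gap, and it sits exactly where the lemma's content lies: the counting. You propose to find roughly $\ELL/10$ disjoint coordinate blocks, ``each contributing a factor of at least $2$,'' and conclude $\ELL^{\Omega(\ELL)}$. But $\ELL/10$ independent binary choices give only $2^{\ELL/10}$, which is exponentially smaller than the target $\ELL^{\ELL/10}$; and since the vectors have only $\ELL$ coordinates you cannot compensate by taking $\Theta(\ELL\log\ELL)$ blocks. To reach $\ELL^{\ELL/10}$ each of your $\ELL/10$ blocks would have to contribute a factor of about $\ELL$, a far stronger requirement that your sketch does not address. The paper gets the count by an entirely different mechanism: the vectors of $V$ are i.i.d.\ \emph{uniform} in $\{0,1\}^\ELL$ (not codewords), and for two multi-subsets of size $\ELL$ that share at most $\ELL/2$ vectors, binomial anti-concentration gives $\Prob(\vsum(\cbig_1)=\vsum(\cbig_2))\leq(\ELL/2)^{-\ELL/2}$; one then needs only a large family of pairwise mostly-disjoint $\ELL$-subsets, and a union bound over \emph{pairs in that family} (not over anything of size $2^{\ELL^2}$).

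The second gap is the role of the cyclic code and the handling of the sub-multisets $V'$. In the paper the constant-weight cyclic code has length $|V|=\ELL(\ELL-1)$ and indexes \emph{which $\ELL$ vectors of $V$ to add}: its minimum distance $7\ELL/4$ guarantees that selections lying in disjoint Hamming balls around distinct codewords share few vectors (feeding the anti-concentration bound), and its \emph{cyclicity} makes the deletion-robustness deterministic --- every position of $[|V|]$ lies in exactly $|C|\ELL/|V|$ codewords, so removing $|V|/64$ positions destroys at most $|C|/4$ of the $|C|/2$ good balls, and the survivors can be patched to live entirely inside $V'$. No union bound over the $\binom{\ELL^2}{\ELL^2/64}$ choices of $V'$ is taken or needed. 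Your plan instead makes the elements of $V$ themselves codewords and then requires a per-$V'$ failure probability below $2^{-\ELL^2}$; you have not shown such concentration is available (the event ``$V'$ lacks the block structure'' is not obviously a sum of independent contributions amenable to such tail bounds), and the ``spread-out'' property of constant-weight codewords is never connected to distinctness of integer vector sums. As written, the proposal identifies the right high-level strategy (probabilistic existence via a constant-weight cyclic code) but misplaces the code and does not supply the two ideas that actually prove the lemma: anti-concentration of sums of random binary vectors over mostly-disjoint selections, and the cyclic averaging argument that converts ``half the balls are good'' into robustness against deleting a $1/64$ fraction of $V$.
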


\subsection{Vector sums and Hamming arrays -- the proof of Lemma~\ref{lem:combinatorial}}

We can now prove Lemma~\ref{lem:combinatorial}. Our method is to show how to obtain a large number of Hamming arrays from the string $R$ by incrementally modifying an initial string of length $2r$ which $R$ will be compared to and which does not contain any symbol in $R$.   Let us call such a string $\U'=\{\tsymb\}^{2r}$, and let $\tsymb$ be a special symbol which does not occur in $R$ (formally $\tsymb=\mu^2+1 \in [r]$).

Consider the first alignment of $R$ and $\U'$ where $R[0]$ is aligned with $\U'[0]$ (refer to the top of Figure~\ref{fig:blocks}). From the set $V$, pick any $\ELL$ vectors and identify their corresponding substrings $\ps_i$ of $R$. For each such substring, set the symbol of $\U'$ that is directly to the right of $\ps_i$ in the alignment to $i$. For example, in Figure~\ref{fig:blocks}, where we have explicitly written out $\ps_0$, $\ps_5$ and $\ps_7$, suppose that $v_0$, $v_5$ and $v_7$ are among the $\ELL$ vectors we picked. As shown in the figure, we set three symbols of $\U'$ to 0, 5 and 7, accordingly.

\begin{figure*}[t]
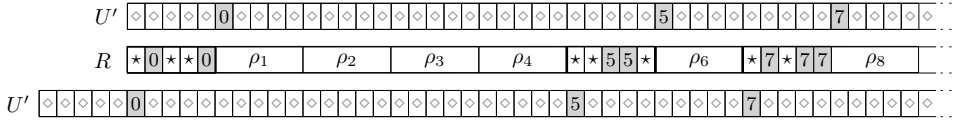

    \centering
    \insertdiagram{pattern-blocks}
    \caption{\label{fig:blocks}Setting symbols of $\U'$ renders a large set of possible Hamming distance outputs.}
\end{figure*}

Now consider the first $\ELL$ Hamming distances in $\Hamarray(R,U')$. We can think of these as being outputted as we slide the string $\U'$, $\ELL$ characters to the left (relative to $R$). The bottom part of the figure illustrates the alignment after sliding $\U'$. We see that the number of matches at each one of the $\ELL$ alignments correspond exactly to the vector sum of the $\ELL$ vectors we picked (in reverse order, to be precise).

We now repeat this process by picking $\ELL$ new vectors from $V$ and setting symbols of $\U'$ accordingly. We cannot pick a vector for which the position is already occupied by a symbol other than $\tsymb$. For the example in the figure, we would not be able to pick $v_4$ or $v_6$. If we did so, we would risk changing the previous Hamming distances. This is in fact the reason that Lemma~\ref{lem:vecsum} must hold not only for $V$ but also for large multi-subsets of $V$. The procedure of picking $\ELL$ vectors and sliding $\U'$ by $\ELL$ steps is repeated a total of $\ELL/64$ times, over which a total of $\ELL^2/64$ symbols of $\U'$ have been set. As $|V|=\ELL^2$, we have in each one of the $\ELL/64$ rounds had access to pick at least $(63/64)|V|$ vectors. Thus, by Lemma~\ref{lem:vecsum}, in each round we had a choice of at least $\ELL^{(\ELL/10)}$ distinct Hamming distance outputs. For correctness it is important to observe that setting a symbol of $\U'$ to some value $i$ only makes this symbol contribute to matches over exactly the $\ELL$ alignments it is intended for. For all other alignments the symbol will mismatch.

The process of performing $\ELL/64$ rounds as above is itself repeated $\ELL$ times. To see how this is possible, apply the following trick: slide $\U'$ left by one single step. By doing this, we offset all symbols, freeing up every occupied position of $\U'$ so that we can perform the process above again. The single-slide trick can be repeated $(\ELL-1)$ times, after which occupied positions will no longer necessarily be freed up but instead reoccupied by symbols that were set during the first rounds.

To sum up, we slide $\U'$ a total of $(\ELL\cdot(\ELL/64)+1)\cdot\ELL \leq \ELL^3/32=r/32$ steps. Over these steps, by Lemma~\ref{lem:vecsum} we have the choice of at least $(\ELL^{(\ELL/10)})^{(\ELL/64)\cdot\ELL}=\ELL^{(\ELL^3/640)}=\ELL^{(r/640)}$ Hamming array outputs. So we have $\log \big|\set{\Hamarray(R,U') | \textup{$U' \in [r]^{2r}$}}\big| \geq (r/640)\log\ELL\in\Omega(r\delta)$ since $\ELL \geq 2^{d/2-1}$. This completes the proof of Lemma~\ref{lem:combinatorial}.

\subsection{Vector sets with many distinct sums -- the proof of Lemma~\ref{lem:vecsum}}\label{sec:proofvecsum}

In this section, we prove Lemma~\ref{lem:vecsum}. We first rephrase it slightly for our purposes.  For any $V' \subset \{0,1\}^\ELL$, we define $\vsum(V') = \set{w_1 + \cdots + w_\ELL \,|\, \text{distinct }w_1,\dots,w_\ELL \in V'}$. Here vector addition is element-wise and over the integers.   We will show that  exists a multi-set $V$  of vectors from
    $\{0,1\}^\ELL$ such that $|V|=\ELL^2$ and for any multi-subset $V'\subseteq V$ of size at least $(63/64)|V|$, we have that $|\vsum(V')| \geq \ELL^{(\ELL/10)}$.

  Our approach will be an application of the probabilistic method. Specifically, we will show that when sampled uniformly at random, the expected value of $\min_{V'} |\vsum(V')| \geq \ELL^{(\ELL/10)}$ and hence there exists a $V$ with the required property.  To prove this result, we will require the following lemma from the field of Coding Theory, which is tailored for our needs and is a special case of ``Construction~II'' in~\cite{AGM:1992}. For our purposes, a binary constant-weight cyclic code can be seen simply as set of bit-strings (codewords) with two additional properties. The first is that all codewords have constant Hamming weight $\ELL$, i.e. they have exactly $\ELL$ ones. The second is that any cyclic shift of a codeword is also a codeword.

\begin{lemma}[\cite{AGM:1992}]
    \label{lem:cyclic-code}
    For any $\ELL\geq 4$ such that $\ELL-1$ is a prime and any odd $q\in[\ELL]$, there is a binary constant-weight cyclic code with $(\ELL-1)^{q}$ codewords of length $\ELL(\ELL-1)$ and Hamming weight $\ELL$ such that any two codewords have Hamming distance at least $2(\ELL-q)$.
\end{lemma}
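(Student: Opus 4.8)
The plan is first to recast the desired code in coordinates that make the cyclic shift transparent. Since $\gcd(\ELL,\ELL-1)=1$, the Chinese Remainder Theorem identifies $\mathbb{Z}_{\ELL(\ELL-1)}$ with $\mathbb{Z}_{\ELL}\times\mathbb{Z}_{\ELL-1}$, and a single cyclic shift of a length-$\ELL(\ELL-1)$ word then acts as simultaneous $+1$ in both factors. I would require that every codeword contain exactly one $1$ in each ``column'' $\{a\}\times\mathbb{Z}_{\ELL-1}$, which forces the weight to be exactly $\ELL$; such a codeword is the graph $\{(a,\phi(a)):a\in\mathbb{Z}_{\ELL}\}$ of a function $\phi\colon\mathbb{Z}_{\ELL}\to\mathbb{Z}_{\ELL-1}$, and the cyclic shift sends $\phi$ to $\phi'$ given by $\phi'(a)=\phi(a-1)+1$. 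Two such codewords, coming from functions $\phi$ and $\psi$, are at Hamming distance exactly $2(\ELL-t)$ where $t=|\{a:\phi(a)=\psi(a)\}|$, so to obtain minimum distance at least $2(\ELL-q)$ it is enough to produce a family $\mathcal F$ of functions that is closed under $\phi\mapsto\phi'$, has $(\ELL-1)^q$ members, and in which any two distinct members agree in at most $q$ positions.

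To build $\mathcal F$ I would use a Reed--Solomon code. Write $p=\ELL-1$; by hypothesis $p$ is prime, so $\mathbb{F}_p$ is a field and the projective line $\mathbb{P}^1(\mathbb{F}_p)$ has $p+1=\ELL$ points. Identify the $\ELL$ columns with $\mathbb{P}^1(\mathbb{F}_p)$ in such a way that ``$+1$ in $\mathbb{Z}_{\ELL}$'' becomes a generator $\gamma$ of a non-split torus of $\mathrm{PGL}_2(\mathbb{F}_p)$, which is cyclic of order $p+1=\ELL$ and acts on $\mathbb{P}^1(\mathbb{F}_p)$ as a single $\ELL$-cycle. Take $\mathcal F$ to be (a translate of) the doubly-extended Reed--Solomon code of dimension $q$ over $\mathbb{F}_p$ whose evaluation points are all of $\mathbb{P}^1(\mathbb{F}_p)$: its codewords are the graphs of the degree-$(<q)$ evaluation functions, there are exactly $p^q=(\ELL-1)^q$ of them, and the Singleton/MDS bound forces any two distinct ones to agree in at most $q-1\leq q$ coordinates, which yields the claimed distance; the hypothesis $q\leq \ELL-1$ is exactly what makes this dimension-$q$ code well defined and MDS. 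What must still be checked is invariance under $\phi\mapsto\phi'$: composing a degree-$(<q)$ evaluation function with the M\"obius map $\gamma^{-1}$ again gives a degree-$(<q)$ evaluation function of a generalised Reed--Solomon code, and adding the all-ones word (the graph of the constant function $1$, of degree $0<q$) must be shown to return us to the same family.

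The main obstacle is precisely this invariance step. One has to choose the dictionary between the cyclic coordinate labels and the projective-line labels, together with the Reed--Solomon column multipliers (the normalisation of the evaluation at each point of $\mathbb{P}^1(\mathbb{F}_p)$), so that the single shift --- precomposition with the order-$\ELL$ M\"obius map $\gamma$ followed by adding the all-ones word --- maps $\mathcal F$ bijectively onto itself; this reconciles the multiplicative twist that $\gamma$ naturally introduces on Reed--Solomon codewords with the purely additive $+1$ coming from the cyclic shift, and it is here that the hypotheses ``$\ELL-1$ prime'' and ``$q$ odd'' are used (the oddness of $q$ being the natural parity for a Reed--Solomon/BCH-type construction built from a symmetric pattern of $q-1$ prescribed zeros). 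This is the content of ``Construction~II'' of~\cite{AGM:1992}; once it is established, the weight, the number of codewords, and the distance all read off immediately from the Reed--Solomon parameters, so in the paper I would invoke that construction and merely specialise it to the parameters $(\ELL,q)$ in the statement.
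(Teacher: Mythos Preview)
The paper does not prove this lemma at all: it is quoted as a special case of ``Construction~II'' of~\cite{AGM:1992} and simply cited without argument. Your proposal ultimately lands in the same place --- invoke that construction and specialise to the parameters $(\ELL,q)$ --- so at the level of what actually appears in the paper you are in agreement; your Reed--Solomon/projective-line sketch of what the construction in~\cite{AGM:1992} might look like is extra detail that the paper provides nothing to compare against.
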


    We will show that Lemma~\ref{lem:vecsum} holds for $V$ of size $(\ELL-1)\ELL<\ELL^2$.
    We first consider a random multi-set $V$ where the vectors are chosen independently and uniformly at random from $\{0,1\}^\ELL$. Later in the proof we will fix the multi-set $V$ and show that it has the property set out in the statement of the lemma.

    A multi-subset of $V$ of size $\ELL$ can be represented by a $|V|$~length bit string with Hamming
    weight~$\ELL$, where a 1 at position $i$ means that the $i$th vector of $V$ is in the multi-subset. Let
    $\Cbig$ be the binary code that contains all codewords of length $|V|$ with Hamming weight $\ELL$. That
    is, $\Cbig$ represents all $\ELL$-sized multi-subsets of $V$. To shorten notation, we refer to $\cbig\in\Cbig$ as both a codeword and a vector set.

    We now let $C\subseteq \Cbig$ be a smaller code such that $C$ is cyclic (i.e. $c[0]c[1]\cdots
    c[\ELL-1]\in C$ implies that also $c[1]c[2]\cdots c[\ELL-1]c[0]\in C$) and the Hamming distance between
    any two codewords in $C$ is at least $7\ELL/4$. We choose $C$ such that its size is $(\ELL-1)^q$, a value
    between $(\ELL-1)^{\ELL/9}$ and $(\ELL-1)^{\ELL/8}$, where $q$ is any odd integer in the interval
    $[\ELL/9,\ELL/8]$. The existence of such a $C$ is guaranteed by Lemma~\ref{lem:cyclic-code}. Like for
    $\Cbig$, every codeword of $C$ has Hamming weight $\ELL$.

    For $c\in C$, we define the \emph{ball}, $\ball{c}=\set{\cbig | \text{$\cbig\in\Cbig$ and
    $\Ham{c,\cbig}\leq \ELL/16$}}$ to be the set of bit strings in $\Cbig$ of weight $\ELL$ at Hamming distance at most
    $\ELL/16$ from $c$. Hence, the $|C|$ balls are all disjoint. We have that for any $c\in C$, using the fact ${a \choose b}\leq (ae/b)^b$,
    \begin{equation*}
        \label{eq:ball-size}
        \big|\ball{c}\big| \leq {\ELL \choose \ELL/16}\cdot{|V| \choose \ELL/16}
            \leq \left(\frac{\ELL e\cdot|V|e}{(\ELL/16)^2}\right)^{\ELL/16}
            \leq \left(\frac{\ELL}{16}\right)^{\ELL/16} .
    \end{equation*}

    For $\cbig\in \Cbig$, we write $\vsum(\cbig)$ to denote the vector in $[\ELL+1]^\ELL$ obtained by
    adding the $\ELL$ vectors in the vector set $\cbig$. For any $\cbig_1\in\ball{c_1}$ and
    $\cbig_2\in\ball{c_2}$, where $c_1,c_2\in C$ are distinct, we now analyse the probability that
    $\vsum(\cbig_1)=\vsum(\cbig_2)$. From the definitions above, it follows that $\cbig_1$ and $\cbig_2$ must differ
    on at least $7\ELL/4-2(\ELL/16)\geq \ELL$ positions, implying that the two vector sets $\cbig_1$ and
    $\cbig_2$ have at most $\ELL/2$ vectors in common, thus at least $\ELL/2$ of the vectors in $\cbig_1$ are not in $\cbig_2$. Let $v_1,\dots,v_{\ELL/2}$ denote those vectors.
    In order to have $\vsum(\cbig_1)=\vsum(\cbig_2)$, for each position $i\in [\ELL]$, the sum
    $s_i=v_1[i]+\cdots+v_{\ELL/2}[i]$ must be some specific value (that depends on the other vectors). Due
    to independence between vectors and their uniform distribution (any element of any vector is 1 with
    probability $1/2$), the most likely value of $s_i$ is $\ELL/4$ for which half of the vectors
    $v_1,\dots,v_{\ELL/2}$ have a 1 at position $i$.
    The probability of having $s_i=\ELL/4$ is exactly ${\ELL/2 \choose \ELL/4 }\cdot 2^{-\ELL/2} \leq
    (\ELL/2)^{-1/2}$, as for any $a$, ${a \choose a/2}\leq 2^a/\sqrt{a}$. Due to independence between the elements of a vector, the probability that all sums
    $s_0,\dots,s_{\ELL-1}$ combined yield $\vsum(\cbig_1)=\vsum(\cbig_2)$ is upper bounded by $(\ELL/2)^{-\ELL/2}$.
    Thus,
    \begin{equation}
        \label{eq:sum-equal}
        \Prob\big(\vsum(\cbig_1)=\vsum(\cbig_2)\big) \, \leq \, \left(\frac{\ELL}{2}\right)^{-\ELL/2} .
    \end{equation}

    For two distinct $c_1,c_2\in C$, we define the indicator random variable $I(c_1,c_2)$ to be 0 if and
    only if there exists a $\cbig_1\in\ball{c_1}$ and a $\cbig_2\in\ball{c_2}$ such that
    $\vsum(\cbig_1)=\vsum(\cbig_2)$. Taking the union bound over all $\cbig_1\in\ball{c_1}$ and
    $\cbig_2\in\ball{c_2}$, and using the probability bound in Equation~(\ref{eq:sum-equal}), we have
    \begin{align}
        \label{eq:union-inner}
        \Prob \big( I(c_1, c_2) = 0 \big)
        \, &\leq \, \big|\ball{c_1}\big|\cdot \big|\ball{c_2}\big| \cdot \left(\frac{\ELL}{2}\right)^{-\ELL/2}  \\
        \, &\leq \, \left(\frac{\ELL}{16}\right)^{2(\ELL/16)} \left(\frac{\ELL}{2}\right)^{-\ELL/2}
        \, \leq \, \left(\frac{1}{\ELL^3}\right)^{\ELL/8} \,. \notag
    \end{align}

    For any $c_1\in C$, we now define the indicator random variable $I'(c_1)$ to be 0 if and only if there
    exists some $c_2\in C\setminus \{c_1\}$ such that $I(c_1,c_2)=0$. Taking the union bound over all $c_2\in
    C$ and using Equation~(\ref{eq:union-inner}), we have
    \begin{align}
        \label{eq:union-outer}
        \Prob \big( I'(c_1) = 0 \big) \, &\leq \!\!\! \sum_{c_2 \in C\setminus \{c_1\}} \!\!\!\! \Prob \big( I(c_1, c_2) = 0 \big)\\
        \, &\leq \, |C| \left(\frac{1}{\ELL^3}\right)^{\ELL/8}
        \, \leq \, \ELL^{(\ELL/8)} \left(\frac{1}{\ELL^3}\right)^{\ELL/8}
        \, \leq \, \frac{1}{2} \,. \notag
    \end{align}

    We say that $\ball{c_1}$ is \emph{good} iff $I'(c_1)=1$. From the definitions above we have that for every $\cbig_1$ in a
    good ball, there is no other ball that contains a $\cbig_2$ such that $\vsum(\cbig_1)=\vsum(\cbig_2)$. It is
    possible that $\vsum(\cbig_1)=\vsum(\cbig_2)$ if $\cbig_2$ is from the same ball as $\cbig_1$ though. The
    expected number of good balls is, by linearity of expectation and Equation~(\ref{eq:union-outer}),
    $\expected{\sum_{c\in C}I'(c)} \geq |C|/2$. The conclusion is that there is a multi-set $V$ of vectors for
    which at least $|C|/2$ balls are good, hence $\vsum(V)\geq (\ELL-1)^{\ELL/9}/2$. From now on, we fix $V$ to be such a multi-set. It remains to show that for any multi-subset $V'$ of $V$ of size $(63/64)|V|$, $\vsum(V')$ is also large.

    Over all codewords in $C$, the total number of 1s is $|C|\ELL$. As $C$ is cyclic, the number of
    codewords that have a 1 in position $i\in [|V|]$ is the same as the number of codewords that have a 1 in any position
    $j\neq i$. Thus, for each one of the $|V|$ positions there are exactly $|C|\ELL/|V|$ codewords in $C$
    with a 1 in that position.

    Let $V'$ be any multi-subset of $V$ of size $(63/64)|V|$. Let $J$ be the set of $|V|/64$ positions that correspond to the vectors of $V$ that are not in $V'$.
    For each $j\in J$ and codeword $c\in C$, we
    set $c[j]$ to 0. The total number of 1s is therefore reduced by exactly $(|V|/64) \cdot
    (|C|\ELL/|V|) = |C|\ELL/64$. The number of codewords of $C$ that have lost $\ELL/16$ or more 1s is
    therefore at most $(|C|\ELL/64)/(\ELL/16)=|C|/4$. Let $C'\subseteq C$ be the set of codewords $c$
    that have lost less than $\ELL/16$ 1s and for which $\ball{c}$ is good. As there are at least $|C|/2$
    good balls, $|C'|\geq |C|/4$. Let the code $C''$ be obtained from $C'$ by replacing, for each codeword $c'\in C'$, every removed 1 with a 1 at some other arbitrary position that is not in $J$. Thus, every codeword of
    $C''$ has Hamming weight $\ELL$ and they all belong to $|C''|=|C'|\geq |C|/4$ distinct good balls.
    Further, every codeword of $C''$, seen as a vector set, only contains vectors from the subset $V'$.
    From the definition of a good ball we have that at least $|C|/4$ distinct vector sums can be
    obtained by adding $\ELL$ vectors from $V'$. Thus, $\vsum(V')\geq (\ELL-1)^{\ELL/9}/4\geq \ELL^{(\ELL/10)}$
    when $\ELL>40$. This completes the proof of Lemma~\ref{lem:vecsum}.

\section*{Acknowledgements}
RC would like to thank Elad Verbin, Kasper Green Larsen, Qin Zhang and the members of CTIC for helpful and insightful discussions about lower bounds during a visit to Aarhus University. We thank Kasper Green Larsen in particular for pointing out that the cell-probe lower bounds we give are in fact tight. We also thank the anonymous reviewers for their helpful comments. Some of the work on this paper has been carried out during RC's visit at the University of Washington.


\printbibliography

\newpage

\appendix

\section{Folklore matrix multiplication reduction}

We show a reduction\footnote{This reduction is attributed to Ely Porat according to Rapha\"{e}l Clifford. Ely Porat attributes it to Piotr Indyk. Piotr Indyk denies this.} from binary matrix multiplication to pattern matching under the Hamming distance.

Consider the following reduction. Assume the input is of two binary matrices $A$ and $B$ of sizes $m \times \ell$ and $\ell \times n$.  For matrix $A$, we write $x$ for each $0$ and for each $1$ we write its column number. For example, $A=((0,0,1),(1,0,1))$ is translated to $A'=((x, x, 3),(1, x, 3))$.  For matrix $B$, we write $y$ for each $0$ and the row number for each $1$.   For example, $B=(0,1),(1,0), (0,0))$ is translated to $B'=((y,1),(2,y),(y,y))$.  Now create pattern $P$ as the concatenation of the rows of $A'$ and text $T$ as the concatenation of the columns of $B'$ with the unique symbol \$  inserted after every column and add $m(\ell-1)$ \$ symbols at the beginning and end of $T$. So, in our example $P = xx31x3$ and $T=\$\$\$y2y\$12y\$\$\$$.

We now count the number of matches between $P$ and $T$ at each alignment, giving in this case $0,0,0,0, 1, 0, 0, 0$ meaning that the second row of $A$ scored $1$ when multiplied with the second column of $B$.  The trick is that the \$ symbols force at most one substring of the pattern corresponding to a row in $A$ to match one substring of $T$ corresponding to a column of $B$ at any given alignment.

\end{document}

%